\DeclareMathOperator\dep{\mathrm{dep}}
\newcommand{\FO}{\mathrm{FO}}
\newcommand{\Max}[1]{\todo[inline]{Max: #1}}
\newcommand{\Jonni}[1]{\todo[inline, color=red!20]{Jonni: #1}}
\newcommand{\Juha}[1]{\todo[inline, color=green!20]{Juha: #1}}
\newcommand{\LTL}{\protect\ensuremath{\mathrm{LTL}}\xspace}
\newcommand{\CTL}{\protect\ensuremath{\mathrm{CTL}}\xspace}
\newcommand{\QPTL}{\protect\ensuremath{\mathrm{QPTL}}\xspace}
\newcommand{\ML}{\protect\ensuremath{\mathrm{ML}}\xspace}
\newcommand{\teamltl}{\protect\ensuremath{\mathrm{TeamLTL}}\xspace}
\newcommand{\lteamltl}{\protect\ensuremath{\mathrm{TeamLTL}^l}\xspace}
\newcommand{\hyctl}{\protect\ensuremath{\mathrm{HyperCTL^*}}\xspace}
\newcommand{\hyltl}{\protect\ensuremath{\mathrm{HyperLTL}}\xspace}
\newcommand{\HQPTLP}{\protect\ensuremath{\mathrm{HyperQPTL\textsuperscript{\hskip-2pt\small +}}}\xspace}
\newcommand{\HQPTL}{\protect\ensuremath{\mathrm{HyperQPTL}}\xspace}
\newcommand{\dfn}{\mathrel{\mathop:}=}
\newcommand{\ddfn}{\mathrel{\mathop{{\mathop:}{\mathop:}}}=}
\newcommand{\lmodels}{\mathrel{\models^l}}
\newcommand{\ldot}{\mathpunct{.}}
\newcommand{\pow}[1]{2^{#1}}
\newcommand{\LL}{\mathcal{L}}
\newcommand{\support}{\mathrm{support}}
\newcommand{\U}{\LTLuntil}
\newcommand{\W}{\LTLweakuntil}
\newcommand{\X}{\LTLnext}
\newcommand{\G}{\LTLg}
\newcommand{\F}{\LTLf}
\newcommand{\R}{\LTLrelease}
\DeclareMathOperator{\M}{\mathcal{M}}
\newcommand{\pathvars}{\mathcal{V}}
\newcommand{\pathassign}{\Pi}
\newcommand{\N}{\mathbb N}
\newcommand{\ap}{\mathrm{AP}}
\newcommand{\traces}{\mathrm{Traces}}
\newcommand{\truth}[1]{\llbracket #1 \rrbracket}
\newcommand{\NP}{\protect\ensuremath{\mathrm{NP}}} 
\newcommand{\PSPACE}{\protect\ensuremath{\mathrm{PSPACE}}}
\newcommand{\EXPSPACE}{\protect\ensuremath{\mathrm{EXPSPACE}}}
\newcommand{\NEXPTIME}{\protect\ensuremath{\mathrm{NEXPTIME}}}
\newcommand{\PL}{\protect\ensuremath{\mathrm{PL}}}
\newcommand{\Ptime}{\protect\ensuremath{\mathrm{P}}}
\renewcommand{\phi}{\varphi}
\renewcommand{\epsilon}{\varepsilon}
\newcommand{\kK}{\mathrm{K}}
\newcommand{\clor}{\varovee}
\newcommand{\lando}{\rotatebox[origin=c]{180}{$\,\varovee\,$}}
\DeclareMathOperator*{\Clor}{\scalerel*{\ovee}{\sum}}
\newcommand{\cneg}{\sim}
\newcommand{\teamctl}{\protect\ensuremath{\mathrm{Team}\CTL}}
\newcommand{\tef}{tef\xspace}
\newcommand{\tefs}{tefs\xspace}
\title{Set Semantics for Asynchronous TeamLTL:
Expressivity and Complexity}
\author{Juha Kontinen}{University of Helsinki, Finland}{juha.kontinen@helsinki.fi}{https://orcid.org/0000-0003-0115-5154}{Partially supported by the Academy of Finland grant 345634.}
\author{Max Sandstr\"om\footnote{Corresponding author}}{University of Sheffield, UK \and University of Helsinki, Finland}{m.sandstrom@sheffield.ac.uk}{https://orcid.org/0000-0002-6365-2562}{Partially supported by the Academy of Finland grant 322795.}
\author{Jonni Virtema}{University of Sheffield, UK \and University of Helsinki, Finland}{j.t.virtema@sheffield.ac.uk}{https://orcid.org/0000-0002-1582-3718}{Partially supported by the Academy of Finland grant 345634 and by the DFG grant VI 1045/1-1.}
\authorrunning{J. Kontinen, M. Sandstr\"om and J. Virtema}
\keywords{Hyperproperties, Linear Temporal Logic, Team Semantics}
\begin{document}

\maketitle

\begin{abstract}
We introduce and develop a set-based semantics for asynchronous $\teamltl$.  We consider two canonical logics in this setting: the extensions of $\teamltl$ by the Boolean disjunction and by the Boolean negation. We establish fascinating connections between the original semantics based on multisets and the new set-based semantics as well as show one of the first positive complexity theoretic results in the temporal team semantics setting. In particular we show that both logics enjoy normal forms that can be utilised to obtain results related to expressivity and complexity (decidability) of the new logics.  We also relate and apply our results to recently defined logics whose asynchronicity is formalized via time evaluation functions.

\end{abstract}

\section{Introduction}
\allowdisplaybreaks

Linear temporal logic (\LTL) is one of the most prominent logics for the specification
and verification of reactive and concurrent systems. The core idea in model checking, as introduced in 1977 by Amir Pnueli~\cite{pnueli}, is to
specify the correctness of a program as a set of infinite sequences, called traces, which define
the acceptable executions of the system. In $\LTL$-model checking one is concerned with trace sets that are definable by an $\LTL$-formula.
Ordinary $\LTL$ and its progeny are well suited for specification and verification of \emph{trace properties}. These are properties of systems that can be checked by going through all executions of the system in isolation. A canonical example here is \emph{termination}; a system terminates if and only if each run of the system terminates. However not all properties of interest are trace properties. Many properties that are of prime interest, e.g., in information flow security require a richer framework. 
The term \emph{hyperproperty} was coined by Clarkson and Schneider \cite{clarkson} to refer to properties of systems which relate multiple execution traces. A canonical example here would be \emph{bounded termination}; one cannot check whether a system terminates in bounded time by only checking traces in isolation.
Checking hyperproperties is vital in information flow security where dependencies between secret inputs and publicly observable outputs of a system are considered potential security violations. Commonly known properties of that type are noninterference~\cite{DBLP:conf/sp/Roscoe95,DBLP:journals/jcs/McLean92} and observational determinism~\cite{DBLP:conf/csfw/ZdancewicM03}. Hyperproperties are not limited to the area of information flow control.
E.g., distributivity and other system properties like fault tolerance can be expressed as hyperproperties~\cite{DBLP:journals/acta/FinkbeinerHLST20}.


During the past decade, the need for being able to formally specify hyperproperties has led to the creation of families of new logics for this purpose, 
since $\LTL$ and other established temporal logics can only specify trace properties.
The two main families of the new logics are the so-called \emph{hyperlogics} and logics that adopt \emph{team semantics}. In the former approach standard temporal logics such as \LTL, \CTL, and \QPTL are extended with explicit trace and path quantification, resulting in logics like \hyltl~\cite{DBLP:conf/post/ClarksonFKMRS14}, \hyctl~\cite{DBLP:conf/post/ClarksonFKMRS14}, and \HQPTL~\cite{MarkusThesis,DBLP:conf/lics/CoenenFHH19}. The latter approach (which we also adopt here) is to lift the semantics of temporal logics to sets of traces directly by adopting team semantics yielding logics such as $\teamltl$ \cite{kmvz18,GMOV22} and $\teamctl$ \cite{KrebsMV15,GMOV22}.

Krebs et al. \cite{kmvz18} introduced two versions of \LTL with  team semantics: a synchronous semantics and an asynchronous variant that differ on how the evolution of time is linked between computation traces when temporal operators are evaluated. In the synchronous semantics time proceeds in lock-step, while in the asynchronous variant time proceeds independently on each trace. For example the formula ``$\F \mathrm{terminate}$" (here $\F$ denotes the future-operator and ``$\mathrm{terminate}$" is a proposition depicting that a trace has terminated) defines the hyperproperty ``bounded termination'' under synchronous semantics, while it expresses the trace property ``termination'' under asynchronous semantics.

The elegant definition of bounded termination exemplifies one of the main distinguishing factors of team logics from hyperlogics; namely the ability to refer directly to unbounded number of traces. Each hyperlogic-formula has a fixed number of trace quantifiers that delineate the traces involved in the evaluation of the formula. Another distinguishing feature of team-based logics lies in their ability to enrich the logical language with novel atomic formulae for stating properties of teams. The most prominent of these atoms are the \emph{dependence atom} $\dep(\bar x,\bar y)$ (stating that the values of the variables $\bar x$ functionally determine the values of $\bar y$) and \emph{inclusion atom} $\bar x \subseteq \bar y$ (expressing the inclusion dependency that all the values occurring for $\bar x$ must also occur as a value for $\bar y$).

As an example, let $o_1,\dots, o_n$ be some public observables and assume that $c$ reveals some confidential information. The atom $(o_1,\dots o_n, c) \subseteq (o_1,\dots o_n, \neg c)$ expresses a form of non-inference by stating that an observer cannot infer the value of the confidential bit from the outputs.



While \hyltl and other  hyperlogics have been studied extensively, many of the basic properties of 
\teamltl are still not well understood. Krebs et al. \cite{kmvz18} showed that synchronous \teamltl and \hyltl are incomparable in expressivity and that the asynchronous variant collapses to $\LTL$\footnote{There was a slight error in the definition of asynchronous semantics in \cite{kmvz18} which we fix here.}. In this article we consider extensions of the asynchronous \teamltl, which have a greater expressive power.
Not much was know about the complexity aspects of \teamltl  until L\"uck \cite{LUCK2020} showed that the complexity of satisfiability and model checking of synchronous \teamltl with Boolean negation $\cneg$ is equivalent to the decision problem of third-order arithmetic.
Subsequently, Virtema et al. \cite{VBHKF20} embarked for a more fine-grained analysis of the complexity of synchronous \teamltl and discovered a decidable syntactic fragment (the so-called \emph{left-flat fragment}) and established that already a very weak access to the Boolean negation suffices for undecidability. They also showed that synchronous \teamltl and its extensions can be translated to \HQPTLP, which is an extension of \hyltl by (non-uniform) quantification of propositions. Kontinen and Sandstr\"om \cite{KS21} defined translations between extensions of \teamltl and the three-variable fragment of  first-order team logic to utilize the better understanding of the properties of logics in first-order team semantics in the study of \teamltl. They also showed that any logic effectively residing between  synchronous $\teamltl$ extended with the Boolean negation and second-order logic inherits the complexity properties of the extension of $\teamltl$ with the Boolean negation. 
Finally, Gutsfeld et al. \cite{GMOV22} reimagined the setting of temporal team semantics to be able to model richer forms of (a)synchronicity by developing the notion of time-evaluation functions.
In addition to reimagining the framework, they discovered decidable logics which however relied on 
restraining time-evaluation functions to be either \emph{$k$-context-bounded} or \emph{$k$-synchronous}. It is worth noting that recently asynchronous hyperlogics have been considered  also in several other articles (see, e.g.,  \cite{GutsfeldMO21,BaumeisterCBFS21}).
%


It can be said that almost all complexity theoretic results previously obtained for $\teamltl$ are negative, and the few positive results have required drastic restrictions in syntax or semantics. 
In this article we take a take a fresh look at the asynchronous variant of \teamltl.
Recent works on synchronous $\teamltl$ have revealed that quite modest extensions of synchronous $\teamltl$ are undecidable. Thus, our study of asynchronous $\teamltl$ partly stems from our desire to discover decidable but expressive logics for hyperproperties.
Until now, all the papers on temporal team semantics have explicitly or implicitly adopted a semantics based on multisets of traces. In the team semantics literature, this often carries the name \emph{strict semantics}, in contrast to \emph{lax semantics} which is de-facto set-based semantics. In database theory, it is ubiquitous that tasks that are computationally easy under set based semantics become untractable in the multiset case. In the team semantics setting this can be already seen in the model checking problem of propositional inclusion logic $\PL(\subseteq)$ which is $\Ptime$-complete under lax semantics, but $\NP$-complete under strict semantics \cite{HellaKMV19}.

Our new set-based framework offers a setting that drops the accuracy that accompanies adoption of multiset semantics, in favour of better computational properties. Consider the following formula expressing a form of strong non-inference in parallel computation:
\(
\G((o_1,...,o_n,c)\subseteq(o_1,...,o_n,\neg c)),
\)
where $o_1,...,o_n$ are observable outputs and $c$ is confidential. 
In the synchronous setting, the formula expresses that during a synchronous computation, at any given time, an observer cannot infer the value of the secret $c$ from the outputs. In the asynchronous setting, the formula states a stronger property that the above property holds for all computations (not only synchronous). In the multiset setting the number of parallel computation nodes is fixed, while in the new lax semantics, we drop that restriction, and consider an undefined number of computation nodes. The condition is stronger in lax semantics; and intuitively easier to falsify, which makes model checking in practice easier.

\textbf{Our contribution.}
We introduce and develop a set-based semantics for asynchronous TeamLTL, which we name \emph{lax semantics} and write $\teamltl^{l}$. We consider two canonical logics in this setting: the extensions of $\teamltl^{l}$ by the Boolean disjunction $\teamltl^{l}(\ovee)$ and by the Boolean negation $\teamltl^{l}(\cneg)$.
By developing the basic theory of lax asynchronous TeamLTL, we discover some fascinating connections between the strict and lax semantics.
We discover that both of the logics enjoy normal forms that can be utilised to obtain expressivity and complexity results. Tables \ref{table:exp} and \ref{table:comp} summarise our results. For comparison, Table \ref{table:comp2} summarises the known results on complexity of synchronous $\teamltl$.

\begin{table}[p!]
\begin{tabular}{ccccc}\toprule
 	$\teamltl^{s/l}$				&  		 	& 	left-flat--$\teamltl^s(\clor)$			& $\stackrel{\textrm{Cor.}\ref{cor:laxstrict}}{<}$ & $\teamltl^s(\ovee)$ \\
 \quad\rotatebox[origin=c]{-90}{$<$}{ \scriptsize Ex. \ref{ex1}}		&   	&	\rotatebox[origin=c]{-90}{$\equiv$}{\scriptsize \, Thm. \ref{lflat}}  			& & \\
 		$\teamltl^l(\ovee)$		&  	$\stackrel{\textrm{Thm. }\ref{disjnf}}{\equiv}$	 	& 	left-flat--$\teamltl^l(\clor)$	& $\stackrel{\dagger}{<}$ & quasi-flat--$\teamltl^{s/l}(\sim)$\\
%
&&&& \quad\quad\rotatebox[origin=c]{-90}{$\equiv$}{ \scriptsize Thm. \ref{qfnf}} \\
&&&& left-dc--$\teamltl^{l}(\cneg)$
\\\bottomrule\\
\end{tabular}
\caption{Expressivity hierarchy of the asynchronous logics considered in the paper. Logics with lax or strict semantics are here referred with the superscripts $l$ and $s$, respectively.
For the definitions of left flatness, quasi flatness, and left downward closure, we refer to Definitions \ref{def:leftflat} and \ref{def:quasiflat}. $\dagger$: This follows since only $\teamltl^l(\clor)$ is downward closed (cf. Theorem \ref{lflat} and Definition \ref{def:quasiflat}).
Theorem \ref{lflat} implies that for $\teamltl(\sim)$-formulae in quasi-flat form the strict and lax semantics coincide.
}
\label{table:exp}
\end{table}

\begin{table}
\begin{center}
\begin{tabular}{cccc}\toprule
Logic &  \multicolumn{2}{c}{Complexity of} & References \\ \cmidrule{2-3}
(asynchronous semantics)& model checking & satisfiability& \\ 
\midrule
$\LTL $   & \PSPACE &  \PSPACE & \cite{SC85}\\
$\PL(\cneg)$   & $\mathrm{ATIME}\mbox{-}\mathrm{ALT(exp,poly)}$ &  $\mathrm{ATIME}\mbox{-}\mathrm{ALT(exp,poly)}$ & \cite{HKVV18}\\
$\teamltl^{l/s}$   & \PSPACE &  \PSPACE & \cite{kmvz18}, Theorem \ref{thmdc}\\
left-flat-$\teamltl^{s/l}(\clor)$  & \PSPACE & \PSPACE & Theorem \ref{thm:complexity}\\
$\teamltl^l(\clor)$  & \PSPACE & \PSPACE & Theorem \ref{thm:complexity}\\
$\teamltl^s(\clor)$ & ??? & \PSPACE & $\dagger$ \\
$\teamltl^s(\dep)$ & \NEXPTIME-hard & $\PSPACE$ & \cite{kmvz18} \\
left-dc-$\teamltl^l(\cneg)$  & in $\mathrm{TOWER(poly)}$ & in $\mathrm{TOWER(poly)}$  & Theorem \ref{thm:complexity}
\\\bottomrule\\
\end{tabular}
\caption{Complexity results of this paper. All results are completeness results if not otherwise specified. $\PL(\cneg)$ refers to the propositional fragment of $\teamltl(\cneg)$ which embeds also to left-dc-$\teamltl^l(\cneg)$.
$\dagger$: All $\PSPACE$-completeness results for satisfiability in strict semantics and $\teamltl^l$ follow directly from classical $\LTL$ by downward closure and singleton equivalence similar to \cite[Proposition 5.4]{kmvz18}.
$\mathrm{ATIME}\mbox{-}\mathrm{ALT(exp,poly)}$ refers to alternating exponential time with polynomially many alternations while $\mathrm{TOWER(poly)}$ refers to problems that can be decided by a deterministic TM in time bounded by an exponential tower of 2's of polynomial height.
}
\label{table:comp}
\end{center}
\end{table}

\begin{table}
\begin{center}
\begin{tabular}{cccc}\toprule
Logic &  \multicolumn{2}{c}{Complexity of} & References \\ \cmidrule{2-3}
(sync. strict semantics)& model checking & satisfiability& \\ 
\midrule
$\teamltl$   & \PSPACE &  \PSPACE & \cite{kmvz18}\\
left-flat-$\teamltl(\clor)$  &in $\EXPSPACE$ & \PSPACE & \cite{VBHKF20}\\
$\teamltl(\dep)$ & \NEXPTIME-hard & $\PSPACE$ & \cite{kmvz18} \\
$\teamltl(\clor)$ & ??? & $\PSPACE$ & $\dagger$ \\
$\teamltl(\clor, \subseteq)$  & $\Sigma^0_1$-hard & $\Sigma^0_1$-hard & \cite{VBHKF20}\\
$\teamltl(\cneg)$  & third-order arithmetic & third-order arithmetic  & \cite{LUCK2020}
\\\bottomrule\\
\end{tabular}
\caption{Complexity results for synchronous strict semantics. All results are completeness results if not otherwise specified. $\dagger$: All $\PSPACE$-completeness results for satisfiability follow directly from classical $\LTL$ by downward closure and singleton equivalence similar to \cite[Proposition 5.4]{kmvz18}.}
\label{table:comp2}
\end{center}
\end{table}



\section{Preliminaries}

Fix a set $\ap$ of \emph{atomic propositions}. 
The set of formulae of \LTL (over $\ap$) is generated by the following grammar:
\[
\varphi \ddfn p \mid \neg p  \mid \varphi \lor \varphi \mid \varphi\land \varphi \mid \X \varphi \mid \G \varphi \mid \varphi \U \varphi , \quad\quad  \text{where $p \in \ap$.}
\]
We adopt the convention that formulae are given in negation normal form, i.e., $\neg$ is allowed to appear only in front of atomic propositions. Note that this is an expressively  complete set of $\LTL$-formulae.

A \emph{trace} $t$ over $\ap$ is an infinite sequence from $(\pow{\ap})^\omega$. For a natural number $i\in\N$, we denote by $t[i]$ the $i$th element of $t$ and by $t[i,\infty]$ the postfix $(t[j])_{j\geq i}$ of $t$. 
Semantics of \LTL is defined in the usual manner:
\begin{align*}
t &\models p                             &&\Leftrightarrow && p \in t[0] \\
t &\models \neg p                  &&\Leftrightarrow && p \not\in t[0] \\
t &\models \varphi_1 \vee \varphi_2      &&\Leftrightarrow &&t \models \varphi_1 \text{ or } t \models \varphi_2 \\
t &\models \LTLnext \varphi              &&\Leftrightarrow && t[1,\infty] \models \varphi \\
t &\models \G\varphi                     &&\Leftrightarrow && \forall i \geq 0.~ t[i,\infty] \models \varphi \\
t &\models \varphi_1 \LTLuntil \varphi_2 &&\Leftrightarrow && \exists i \geq 0.~ t[i,\infty] \models \varphi_2 \\
           &&&&&\text{ and } \forall j:0 \leq j < i \Rightarrow t[j,\infty] \models \varphi_1
\end{align*}
The \emph{truth value} of a formula $\varphi$ on
a trace $t$  is denoted by
 $\llbracket \varphi \rrbracket_{t}\in\{0,1\}$.
 
 The logical constants $\top,\bot$ and the operators $\F$ and $\W$ can be defined in the usual way: $\bot \dfn p \land \neg p$, $\top \dfn p \lor \neg p$,  $\F \phi\dfn\top\U\phi$, and $\phi\W \psi\dfn(\phi\U\psi)\vee \G \phi$.

Next we give the so-called asynchronous team semantics for \LTL introduced in \cite{kmvz18}. In \cite{kmvz18}, the release operator was defined slightly erroneously; we fix the issue here by taking $\G$ as primitive.
Informally, a multiset of traces $T$ is a collection of traces with possible repetitions. Formally, we represent $T$ as a set of pairs $(i,t)$, where $i$ is an \emph{index} (from some suitable large set) and $t$ is a trace. We stipulate that the elements of a multiset have distinct indices. From now on, we will always omit the index and write $t$ instead of $(i,t)$.
For multisets $T$ and $S$, $T \uplus S$ denotes the disjoint union of $T$ and $S$ (obtained by stipulating that traces in $S$ and $T$ have disjoint sets of indices).
Note that all the functions $f$ with domain $T$ are actually of the form $f((i,t))$ and may map different copies of the trace $t$ differently.
A \emph{team} (\emph{multiteam}, resp.) is a set (multiset, resp.) of traces.
If $f\colon T\rightarrow \mathbb{N}$ is a function, we define the update of the team through $T[f,\infty]\dfn\{t[f(t),\infty]\mid t\in T\}$, where the function $f$ determines for each trace the point in time it updates to. For functions $f$ and $f'$ as above, we write $f'<f$ if for all $t\in T$ it holds that $f'(t)<f(t)$.
The underlying team $\support(T) \dfn \{t \mid (i,t)\in T\}$ of a multiteam $T$ is called the \emph{support} of $T$.

\begin{definition}[Team Semantics for \LTL]\label{def:strict}
Let $T$ be a multiteam, $l\in \{p,\neg p \mid p\in \mathrm{AP}\}$ a literal, and $\varphi$ and $\psi$ \LTL-formulae. The asynchronous team semantics of \teamltl is defined as follows.
\begin{align*}
		&T\models l &&\Leftrightarrow&&  t\models l \text{ for all } (i,t)\in T\\ 
		&T \models\varphi\wedge\psi &&\Leftrightarrow&& T\models\varphi\text{ and }T\models\psi\\
		&T \models\varphi\vee\psi &&\Leftrightarrow&& \exists T_1,T_2 \text{ s.t. }  T_1\uplus T_2=T\text{ and }  T_1\models\varphi\text{ and }T_2\models\psi\\
		&T\models \X\varphi &&\Leftrightarrow&& T[1,\infty]\models\varphi, \text{ where $1$ is the constant function $t\mapsto 1$}\\
		&T\models \G\varphi &&\Leftrightarrow&& \forall  f\colon T\rightarrow \mathbb{N}
		  \quad T[f,\infty] \models\varphi\\
		&T\models\varphi \U\psi &&\Leftrightarrow&& \exists f\colon T\rightarrow \mathbb{N}
		  \quad T[f,\infty] \models\psi\text{ and }
		\forall f'<f \colon T'[f',\infty] \models\varphi,\\
		&&&&& \text{where } T'\dfn\{t\in T \mid f(t)\neq 0\}
\end{align*}
The synchronous variant of the semantics is obtained by allowing $f$ to range only over constant functions.
\end{definition}
We also consider the following Boolean connectives: {\em Boolean disjunction} $\clor$ and {\em Boolean negation} $\cneg$ interpreted in the usual way.
\[
\begin{array}{lcllcllcllcl}
T \models \varphi \clor \psi  &\text{ iff }& T \models \varphi \text{ or } T \models \psi \quad\quad\quad\quad\quad\quad
T \models \,\cneg \varphi  &\text{ iff }& T \not\models \varphi
\end{array}
\]
In this paper, we take the asynchronous semantics as the standard semantics and write $\teamltl$ for asynchronous $\teamltl$.

Next we define some importation semantic properties of formulae studied in the literature:
\begin{description}
\item[(Downward closure)]
If $T\models\phi$ and $S\subseteq T$, then $S\models\phi$.
\item[(Empty team property)] $\emptyset\models\phi$.
\item[(Flatness)] $T\models\phi$ ~~iff~~ $\{t\} \models\phi$ for all $t\in T$.
\item[(Singleton equivalence)] $\{t\}\models\phi$ ~~iff~~ $t\models\phi$.
\end{description}
\noindent A logic has one of the above properties if every formula of the logic has the property.
It is easy to check that \teamltl has all the above properties \cite{kmvz18}.

We will now justify our choice of semantics. The semantic rules for literals, conjunction, and disjunction are the standard ones in team semantics, and which have been motivated numerous times in the literature \cite{vaananen07}. The two main desirable properties for the logic to have are flatness and singleton equivalence, which also motivated the original definition of asynchronous $\teamltl$ \cite{kmvz18}. The given semantics for $\X$ is the only possible one that satisfies flatness. The same is true for $\F$ (i.e., $\top \U \varphi$) and $\G$; moreover the semantics clearly capture the intuitive meanings of asynchronously in the future and asynchronously globally, respectively. The given semantics for $\U$ preserves flatness and singleton equivalence, and adequately captures the intuitive meaning of asynchronous until.
The framework of asynchronous $\teamltl$ then allows us to define different variants of the familiar temporal operators. E.g.,
$\varphi \W_1 \psi \dfn \G \varphi \lor \varphi \U \psi$ and $\varphi \W_2 \psi \dfn \G \varphi \clor \varphi \U \psi$ define different variants of \emph{weak until}; the first of which is flat, while the second is not.
\begin{align*}
&T\models\varphi\W_1\psi &&\Leftrightarrow&& \exists T_1,T_2 \text{ s.t. }  T_1\uplus T_2=T,\,   T_1\models\G \varphi \text{ and }T_2\models \varphi\U\psi \\
&T\models\varphi\W_2\psi &&\Leftrightarrow&&  T\models\G \varphi \text{ or }T\models \varphi\U\psi
\end{align*}
Similarly $\varphi \R_1 \psi \dfn \psi\U((\psi\wedge\varphi)\vee\G\psi)$ and $\varphi \R_2 \psi \dfn \psi \U ((\psi\wedge\varphi)\clor\G\psi)$ give rise to different variants of \emph{release}.
Moreover, with $\cneg$ one can define additional dual operators.

A defining feature of team semantics is the ability to enrich logics with novel atomic statements describing properties of teams in a modular fashion. For example,
{\em dependence atoms} $\dep(\phi_1,\dots,\phi_n,\psi)$ and {\em inclusion atoms} $\phi_1,\dots,\phi_n  \subseteq  \psi_1,\dots,\psi_n$, with $\phi_1,\dots,\phi_n ,\psi,\psi_1,\dots,\psi_n$ being $\LTL$-formulae, have been studied extensively in first-order and modal team semantics. The dependence atom states that the truth value of $\psi$ is functionally determined by that of $\phi_1, \ldots, \phi_n$ whereas the inclusion atom states that each value combination of $\phi_1, \ldots, \phi_n$ must also occur as a value combination for $\psi_1, \ldots, \psi_n$. The  semantics of these  atoms can be given as follows:
\begin{eqnarray*}
T\models  \dep(\phi_1,\dots,\phi_n,\psi)  ~~\text{iff}~~ \forall t ,t' \in T :
\Big( \bigwedge_{1\leq j\leq n}  \truth{\phi_j}_{t} = \truth{\phi_j}_{t'} \Big) \Rightarrow \truth{\psi}_{t} = \truth{\psi}_{t'}
\end{eqnarray*}
\[
T\models \phi_1,\dots,\phi_n  \subseteq  \psi_1,\dots,\psi_n  ~~\text{iff}~~ \forall t\in T \, \exists t'\in T :
\bigwedge_{1\leq j\leq n}  \truth{\phi_j}_{t} = \truth{\psi_j}_{t'}		
\]
Consider the following example formula with dependence atoms:
	\[
	\G \dep(i_1, i_2, o) \lor \G \dep(i_2, i_3, o)
	\]
This formula states that the executions of the system can be decomposed into two parts; in the first part, the output $o$ is determined by the inputs $i_1$ and $i_2$, and in the second part, $o$ is determined by the inputs $i_2$ and $i_3$.

If $\mathcal A$ is a collection of atoms and connectives, $\teamltl(\mathcal A)$ denotes the extension of $\teamltl$ with the atoms and connectives in $\mathcal A$. 
It is straightforward to see (in analogy  to the modal team semantics setting \cite{KontinenMSV14}) that any dependency such as the ones above is determined by a finite set of $n$-ary Boolean relations. Let $B$ be a set of $n$-ary Boolean relations. We define the property $[\varphi_1,\dots,\varphi_n]_B$ for an $n$-tuple $(\varphi_1,\dots,\varphi_n)$ of $\LTL$-formulae:
\[
T \models [\varphi_1,\dots,\varphi_n]_B \quad\text{iff}\quad \{ (\truth{\phi_1}_{t}, \dots,  \truth{\phi_n}_{t} ) \mid t\in T  \} \in B.
\]
Expressions of the form $[\varphi_1,\dots,\varphi_n]_B$ are \emph{generalised atoms}.
It was shown in \cite{VBHKF20} that, in the synchronous setting, $\teamltl(\sim)$ is expressively complete with respect to all generalised atoms, whereas the extension of $\teamltl(\ovee)$
with the so-called \emph{flattening operator}
can express any downwards closed generalised atoms.
These results readily extend to the asynchronous setting. Moreover the flattening operator renders itself unnecessary due to flatness of asynchronous $\teamltl$.
The results imply, e.g, that the (downwards closed) dependence atoms 
can be expressed in both of the logics $\teamltl(\sim)$ and $\teamltl(\ovee)$, and inclusion atoms in turn are expressible in $\teamltl(\sim)$. The proof of the following theorem is essentially the same as the proof of \cite[Proposition 17]{DBLP:journals/corr/abs-2010-03311}.
\begin{theorem}
Let $\mathcal{A}$, $\mathcal{D}$ be the sets of all generalised atoms, and all downward closed generalised atoms. Then $\teamltl(\mathcal{A},\ovee) \equiv \teamltl(\ovee)$ and $\teamltl(\mathcal{D},\sim) \equiv \teamltl(\sim)$.
\end{theorem}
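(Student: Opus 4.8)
The plan is to prove each equivalence via its nontrivial direction. As $\teamltl(\ovee)$ and $\teamltl(\sim)$ are syntactic fragments of $\teamltl(\mathcal{A},\ovee)$ and $\teamltl(\mathcal{D},\sim)$ respectively, I only need to eliminate the extra atoms: it suffices to show that a single generalised atom $[\varphi_1,\dots,\varphi_n]_B$ (respectively, a single downward closed such atom) is equivalent to a formula of the base logic extended by the relevant connective, after which a straightforward induction over formula structure, using that the connectives are congruences for $\equiv$, completes the translation.

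The key tool will be a family of flat characteristic formulae built from the $\LTL$-arguments $\varphi_1,\dots,\varphi_n$. For a Boolean profile $\bar a=(a_1,\dots,a_n)$ set $\chi_{\bar a}\dfn\bigwedge_{a_i=1}\varphi_i\wedge\bigwedge_{a_i=0}\neg\varphi_i$; this is an ordinary $\LTL$-formula, so by singleton equivalence $\{t\}\models\chi_{\bar a}$ iff $\truth{\varphi_i}_t=a_i$ for every $i$. For a set $R$ of profiles put $\psi_R\dfn\bigvee_{\bar a\in R}\chi_{\bar a}$ using the team disjunction of $\teamltl$. Since the $\chi_{\bar a}$ are flat and pairwise exclusive, $\psi_R$ is flat and $T\models\psi_R$ holds exactly when the set of profiles realised in $T$ is contained in $R$.

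For $\teamltl(\mathcal{D},\sim)\equiv\teamltl(\sim)$ I would exploit that $\sim$ together with $\wedge$ gives the full Boolean closure on teams, in particular the Boolean disjunction $\Clor$. One expresses ``$T$ realises $\bar a$'' as $\cneg\psi_{\{0,1\}^n\setminus\{\bar a\}}$, and hence ``the realised profile set is exactly $R$'' as $\psi_R\wedge\bigwedge_{\bar a\in R}\cneg\psi_{\{0,1\}^n\setminus\{\bar a\}}$. Taking the Boolean disjunction of these formulae over all $R\in B$ yields a $\teamltl(\sim)$-formula equivalent to $[\varphi_1,\dots,\varphi_n]_B$; since this construction works for arbitrary $B$, it in particular eliminates every downward closed generalised atom into $\teamltl(\sim)$.

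For $\teamltl(\mathcal{A},\ovee)\equiv\teamltl(\ovee)$ the candidate translation combines the flat formulae directly by the Boolean disjunction, namely $\Clor_{R\in B}\psi_R$, asserting that the realised profile set is contained in some $R\in B$. The hard part, and the step I expect to demand the most care, is verifying that this disjunction of flat formulae really captures $[\varphi_1,\dots,\varphi_n]_B$: because $\teamltl(\ovee)$ is downward closed and each $\psi_R$ expresses the downward closed condition ``realised profiles $\subseteq R$'', the verification amounts to showing that membership of the realised set in $B$ is witnessed by inclusion in one of the relations $R\in B$, for which I would restrict the disjunction to the maximal relations of $B$ and read off the equivalence from flatness and the semantics of $\ovee$, exactly as in \cite[Proposition 17]{DBLP:journals/corr/abs-2010-03311}. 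Finally, since every construction above preserves flatness and singleton equivalence and is insensitive to the synchronous versus asynchronous reading of the temporal operators, the whole argument transfers verbatim to the asynchronous setting.
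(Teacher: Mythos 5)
Your construction is, in substance, exactly the proof the paper has in mind. The paper proves nothing in-line here: it defers entirely to \cite[Proposition 17]{DBLP:journals/corr/abs-2010-03311}, and that proof runs precisely along your lines --- characteristic $\LTL$-formulae $\chi_{\bar a}$ for Boolean profiles, the split-disjunctions $\psi_R$ expressing ``every profile realised in $T$ lies in $R$'' (with flatness of asynchronous $\teamltl$ doing the work that the flattening operator does in the synchronous setting, as the paper remarks), and, on the $\cneg$ side, pinning down the realised profile set exactly via $\psi_R\wedge\bigwedge_{\bar a\in R}\cneg\psi_{\{0,1\}^n\setminus\{\bar a\}}$ and taking $\Clor_{R\in B}$ (definable from $\cneg$ and $\wedge$). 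Your $\cneg$-translation is correct as written and, as you observe, eliminates \emph{arbitrary} generalised atoms, so the second equivalence is fine.

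The gap is in the $\ovee$ half, and you have correctly located it --- but it cannot be closed, because the theorem as printed is mis-stated: $\mathcal{A}$ and $\mathcal{D}$ have been transposed. Your candidate $\Clor_{R\in B}\psi_R$ asserts that the realised profile set is \emph{contained in} some $R\in B$, which is equivalent to \emph{membership} in $B$ if and only if $B$ is closed under subsets, i.e.\ exactly when the atom is downward closed; restricting the disjunction to the maximal relations of $B$ does not help, since containment in a maximal element of $B$ still does not imply membership in $B$. Indeed no translation of all of $\mathcal{A}$ into $\teamltl(\ovee)$ can exist: every $\teamltl(\ovee)$-formula is downward closed (asynchronous $\teamltl$ is flat, and $\wedge$, $\vee$, $\ovee$ and the temporal operators preserve downward closure), whereas $\mathcal{A}$ contains non-downward-closed atoms such as the inclusion atom --- which the paper itself lists as expressible in $\teamltl(\sim)$, pointedly not in $\teamltl(\ovee)$. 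The sentence immediately preceding the theorem gives the intended pairing ($\sim$ for all generalised atoms, $\ovee$ for the downward closed ones), so the claim should read $\teamltl(\mathcal{A},\sim)\equiv\teamltl(\sim)$ and $\teamltl(\mathcal{D},\ovee)\equiv\teamltl(\ovee)$ --- and your two constructions prove exactly that corrected statement: the $\cneg$-formula handles every $B$, and $\Clor_{R\in B}\psi_R$ is correct precisely for subset-closed $B$ (containment in some $R\in B$ then yields membership, and conversely $R$ may be taken to be the realised set itself).
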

In the above $L \equiv L'$ denotes the equivalence of the logics $L$ and $L'$. 

 \section{Set-based semantics for TeamLTL}

Next we define a relaxed version of the asynchronous semantics. It is called \emph{lax} semantics as it corresponds to the so-called lax semantics of first-order team semantics (see e.g., \cite{galliani12}). From now on we refer to the semantics of Definition \ref{def:strict} as strict semantics. The possibility of considering lax semantics for  $\teamltl$ was suggested by L\"uck already in \cite{LuckPHD20} but the full definition was not given. Intuitively, lax semantics can always be obtained from a strict one by checking what strict semantics would yield if multiteams would be enriched with unbounded many copies of each of its traces. 
One of the defining features of lax semantics is that it unable to distinguish multiplicities, which is formalised by
Proposition \ref{prop:setbased} below.

We need some notation for the new definition. We write $\mathcal{P}(\mathbb{N})^+$ to denote $\mathcal{P}(\mathbb{N})\setminus \{\emptyset\}$. For a team $T$ and function $f\colon T\rightarrow \mathcal{P}(\mathbb{N})^+$, we write $T[f,\infty]$ to denote the team $\{t[s,\infty]\mid t\in T,s\in f(t)\}$.
For $T'\subseteq T$, $f \colon T \to \mathcal{P}(\mathbb{N})^+$, and $f' \colon T' \to \mathcal{P}(\mathbb{N})^+$, we define that $f'< f$ if and only if
\[
\text{$\forall t \in T'$: $\min(f'(t)) \leq \min(f(t))$ and $\max(f'(t)) < \max(f(t))$ (assuming $\max(f(t))$ exists)}.
\]
Now for $f \neq f'$, it holds that $f' < f$ if and only if there exists a function $g\colon T'[f',\infty] \to \mathcal{P}(\mathbb{N})^+$ such that $\big(T'[f',\infty]\big)[g,\infty] = T'[f,\infty]$.

  \begin{definition}[$\teamltl^{l}$]
    Suppose $T$ is a team, and $\varphi$ and $\psi$ are \teamltl formulae. The lax semantics is defined as follows. We only list the cases that that differ from the standard semantics.
      \begin{align*}
     	&T \models^l\varphi\vee\psi &&\Leftrightarrow&& \exists T_1,T_2 \text{ s.t. }  T_1\cup T_2=T\text{ and }  T_1\models\varphi\text{ and }T_2\models\psi\\
			&T\models^l \G\varphi &&\Leftrightarrow&& \forall  f\colon T\rightarrow \mathcal{P}(\mathbb{N})^+ 
		 \text{ it holds that } T[f,\infty] \models^l\varphi \\
			&T\models^l\varphi \U\psi &&\Leftrightarrow&& \exists f\colon T\rightarrow \mathcal{P}(\mathbb{N})^+  \text{ such that } T[f,\infty] \models^l\psi\mbox{ and } \\
			&&&&& \forall f'\colon T'\rightarrow \mathcal{P}(\mathbb{N})^+ \text{s.t. $f' < f$, it holds that } T'[f',\infty] \models^l\varphi \text{ or } T'= \emptyset, \\
			&&&&& \text{where }T'\dfn\{t \in T \mid \max(f(t))\neq 0  \}	
      \end{align*}
  \end{definition}
  In the context we will be considering in this article, the subformulae $\varphi$ in the definition of the until operator $\U$ always has the empty team property and thus we disregard the possibility that the team $T'$ is empty in our proofs, as that case follows from the empty team property.
  
  The above set-based semantics can also be viewed in terms of multisets. In that case functions $f$ are quantified uniformly, i.e. we restrict our consideration to functions where $f(i,t) = f(j,t)$. Furthermore semantics for disjunctions is defined in a way that omit references to multiplicities. We extend the lax semantics to multiteams $T$ by stipulating that $T\models^l \varphi$ iff $\support(T)\models^l \varphi$. This stipulation makes it easier to relate our new logic to the old multiteam based ones.

The following theorem shows that $\teamltl^{l}(\cneg)$ satisfies the so-called \emph{locality} property, see Appendix \ref{a:one} for the proof.
For a trace $t$ over $\ap'$ and $\ap\subseteq \ap'$, the \textit{reduction of $t$} to $\ap$, $t_{\upharpoonright\ap}$, is a sequence from $(2^\ap)^\omega$ such that $p\in t[i]$ if and only if $p\in t_{\upharpoonright\ap}[i]$, for all $p\in\ap$ and $i\in\N$. For a team $T$ over $\ap'$ we define the \textit{reduction of $T$} to $\ap$ by $T_{\upharpoonright\ap} = \{t_{\upharpoonright\ap} \mid t\in T\}$.
\begin{restatable}{proposition}{setbased}\label{prop:setbased}
Let $T$ be a team and $\varphi$ a $\teamltl^{l}(\cneg)$-formula with variables in $\ap$. Now $T\models^l \varphi$ iff $T_{\upharpoonright \ap}\models^l \varphi$.
\end{restatable}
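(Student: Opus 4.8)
The plan is to prove the biconditional by structural induction on $\varphi$, taking as induction hypothesis the \emph{full} statement of the proposition (for every choice of $\ap\subseteq\ap'$ and every team over $\ap'$), so that the hypothesis is available for each subformula applied to the auxiliary teams the temporal operators produce. The technical backbone is a single \emph{commutation lemma}: reduction to $\ap$ commutes with shifting pointwise, $(t[s,\infty])_{\upharpoonright\ap} = (t_{\upharpoonright\ap})[s,\infty]$, and, more importantly, it sets up a two-way correspondence between update functions. Given $f\colon T\to\mathcal{P}(\N)^+$ put $g(r) \dfn \bigcup\{ f(t) \mid t\in T,\ t_{\upharpoonright\ap}=r\}$; given $g\colon T_{\upharpoonright\ap}\to\mathcal{P}(\N)^+$ put $f(t)\dfn g(t_{\upharpoonright\ap})$. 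In both cases $(T[f,\infty])_{\upharpoonright\ap} = (T_{\upharpoonright\ap})[g,\infty]$, and every $g$ arises from some $f$ via the first assignment (e.g. taking $f(t)=g(t_{\upharpoonright\ap})$). Consequently the families of reduced reachable teams $\{(T[f,\infty])_{\upharpoonright\ap} \mid f\}$ and $\{(T_{\upharpoonright\ap})[g,\infty] \mid g\}$ coincide. I would isolate this as a lemma first, since every temporal case rests on it.

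The atomic and Boolean cases are then immediate. For literals the claim holds because reduction preserves the presence of each $p\in\ap$ at every position, so the ``all traces'' condition is unaffected; for $\wedge$, $\clor$ and $\cneg$ the team is untouched, so the statement follows by applying the hypothesis to the immediate subformulae. The $\X$ case uses the pointwise commutation $(T[1,\infty])_{\upharpoonright\ap} = (T_{\upharpoonright\ap})[1,\infty]$ together with the hypothesis applied to $T[1,\infty]$. The splitting disjunction is handled by transporting splittings across reduction: from a split $T_1\cup T_2 = T$ one gets $(T_1)_{\upharpoonright\ap}\cup(T_2)_{\upharpoonright\ap}=T_{\upharpoonright\ap}$, and conversely from $S_1\cup S_2 = T_{\upharpoonright\ap}$ one takes the full preimages $T_i \dfn \{t\in T\mid t_{\upharpoonright\ap}\in S_i\}$, which satisfy $T_1\cup T_2 = T$ and $(T_i)_{\upharpoonright\ap}=S_i$. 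Here it is essential that lax disjunction uses ordinary union, so overlapping preimages cause no problem; the hypothesis on $\varphi$ and $\psi$ then closes both directions.

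For $\G$ the universal quantifier over update functions is matched on the two sides directly by the correspondence: $T\models^l\G\varphi$ iff $T[f,\infty]\models^l\varphi$ for all $f$, iff (by the hypothesis applied to each $T[f,\infty]$) $(T[f,\infty])_{\upharpoonright\ap}\models^l\varphi$ for all $f$, iff $(T_{\upharpoonright\ap})[g,\infty]\models^l\varphi$ for all $g$ (using that the two families of reduced teams coincide), which is exactly $T_{\upharpoonright\ap}\models^l\G\varphi$.

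The $\U$ case is where the real difficulty lies, and I expect it to be the main obstacle. The existential witness is unproblematic: from a witness $f$ on $T$ the associated $g$ gives $(T_{\upharpoonright\ap})[g,\infty]=(T[f,\infty])_{\upharpoonright\ap}$, which satisfies $\psi$ by the hypothesis, and one checks $(T_{\upharpoonright\ap})' = (T')_{\upharpoonright\ap}$ for the relevant subteams. The delicate point is the inner universal ``$\forall f' < f$''. Because reduction can merge several traces of $T$ into one trace of $T_{\upharpoonright\ap}$, a function $g'$ ranging over the order $<$ on the reduced side need not pull back to any single $f'<f$: the shift-set $g'(r)$ must be redistributed over the whole $\ap$-class of $r$ while simultaneously respecting, trace by trace, the constraints $\min f'(t)\le\min f(t)$ and $\max f'(t)<\max f(t)$ defining $<$. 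Reconciling the strict order with this collapse is the crux. The approach I would take is to work through the reachability reformulation of $<$ stated above (that $f'<f$ exactly when $T'[f,\infty]$ is a further update of $T'[f',\infty]$), which commutes with reduction more robustly than the $\min/\max$ description, and to choose the witness and the redistribution together rather than fixing $g$ in advance, applying the hypothesis on $\varphi$ to the $\ap'$-teams reached by the pulled-back functions. Making the quantifier alternation survive the merging of traces is the step needing the most care, and it is the part I would write out in full detail.
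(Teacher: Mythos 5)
Everything in your proposal up to and including the $\G$ case is exactly the paper's own argument: the same commutation of reduction with pointwise shifts, the same full-preimage splitting for $\lor$, and the same two-way correspondence $f(t)\dfn g(t_{\upharpoonright\ap})$ and $g(r)\dfn\bigcup\{f(t)\mid t_{\upharpoonright\ap}=r\}$ for $\G$. The gap is the $\U$ case, which you explicitly defer (``the part I would write out in full detail''): you never produce, for an arbitrary $g<f'$ on the reduced side (where $f'$ is the pushforward of the witness $f$), functions on the original team that simultaneously respect the order $<$ and reduce to $(T_{\upharpoonright\ap})'[g,\infty]$. Transporting that inner universal quantifier is the whole substance of the proposition once the bookkeeping is in place, so as it stands your text is a plan, not a proof.

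Moreover, the obstruction you flagged is not merely delicate; it cannot be overcome, so no choice of ``witness and redistribution together'' will close it. Take $t_1\neq t_2$ over $\{p,q\}$ with $t_{1\upharpoonright\{p\}}=t_{2\upharpoonright\{p\}}=r$, where $p$ holds in $r$ exactly at positions $0$ and $3$, and let $f(t_1)=\{1\}$, $f(t_2)=\{2\}$. The conditions defining $<$ force $f_2(t_1)=\{0\}$ for every $f_2<f$, so every intermediate team on the original side contains $t_1[0,\infty]$, whereas on the reduced side $g(r)=\{1\}$ satisfies $g<f'$ and produces the intermediate team $\{r[1,\infty]\}$, which contains no copy of $r[0,\infty]$. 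Since $\teamltl^l(\cneg)$ is not downward closed, this mismatch is detectable: with $\psi\dfn\cneg(\neg p\lor\X\X\X\neg p)$ (``some trace satisfies $p\land\X\X\X p$'', i.e.\ $r[0,\infty]$ is present) and $\chi$ the $\cneg$-expressible statement that the team contains $r[1,\infty]$ and $r[2,\infty]$ but not $r[0,\infty]$, one checks that $\{t_1,t_2\}\models^l\psi\U\chi$ (via the witness $f$ above) yet $\{r\}\not\models^l\psi\U\chi$, because any witness set on $\{r\}$ must contain $1$ and $2$ and omit $0$, whereupon $g(r)=\{1\}$ refutes the inner universal. So the step you left open cannot be completed as stated. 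You should also be aware that the paper's own appendix proof does not complete it either: it pushes each $f_2<f_1$ forward to $f_2'$ and concludes, tacitly assuming that every $g<f_1'$ arises as such a pushforward, which is precisely the false step you identified. Your diagnosis of the crux is sharper than the paper's treatment, but neither your sketch nor the paper's argument discharges the until case; the transport does go through on the $\cneg$-free (downward-closed) fragment, where $(T_{\upharpoonright\ap})'[g,\infty]$ is a subteam of a pushforward team and downward closure finishes the job.
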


 The next theorem displays that lax semantics enjoys the same fundamental properties as its strict counterpart. The proof via a straightforward induction, see Appendix \ref{a:one} for details.

 \begin{restatable}{theorem}{basicprop}\label{thmdc}
 $\lteamltl$ satisfies downward closure, empty team property, singleton equivalence, and flatness.
 \end{restatable}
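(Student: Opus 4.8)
The plan is to prove a single strengthened invariant by induction on the structure of $\varphi$ and read off all four properties from it. The invariant $(\star)$ is: for every team $T$, we have $T \models^l \varphi$ if and only if $t \models \varphi$ for every $t \in T$, where $t \models \varphi$ denotes ordinary single-trace $\LTL$ satisfaction. From $(\star)$ all four properties follow uniformly: singleton equivalence is the instance $T = \{t\}$; flatness is obtained by chaining $(\star)$ with singleton equivalence; downward closure holds because passing to $S \subseteq T$ only drops traces from a universally quantified condition; and the empty team property is the vacuous case $T = \emptyset$. Since $\lteamltl$ contains neither $\clor$ nor $\cneg$, no non-flat connective can appear, so it suffices to establish $(\star)$.

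The straightforward cases of the induction are as follows. For a literal $l$, $(\star)$ is exactly the semantic clause $T \models^l l \Leftrightarrow t \models l$ for all $t \in T$. The case $\varphi \wedge \psi$ is immediate since both sides distribute over the conjunction. For $\varphi \vee \psi$ the decisive feature is that lax disjunction uses the plain set union $T_1 \cup T_2 = T$: from a witnessing split the induction hypothesis makes every trace of $T_1$ satisfy $\varphi$ and every trace of $T_2$ satisfy $\psi$, so every $t \in T$ satisfies $\varphi \vee \psi$; conversely the canonical split $T_1 = \{t \in T \mid t \models \varphi\}$ and $T_2 = \{t \in T \mid t \models \psi\}$ covers $T$ and satisfies the disjuncts by the induction hypothesis, with overlaps causing no harm because the union need not be disjoint. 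For $\X\varphi$ one rewrites $T[1,\infty] \models^l \varphi$ with the induction hypothesis and uses $t[1,\infty] \models \varphi \Leftrightarrow t \models \X\varphi$. For $\G\varphi$, observe that as $f$ ranges over all functions $T \to \mathcal{P}(\mathbb{N})^+$ the traces occurring in $T[f,\infty]$ are exactly the shifts $t[s,\infty]$ with $t \in T$ and $s \in \N$; applying the induction hypothesis to each $T[f,\infty]$ reduces $T \models^l \G\varphi$ to the condition that $t[s,\infty] \models \varphi$ for all $t \in T$ and $s \in \N$, i.e. that $t \models \G\varphi$ for all $t \in T$.

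The until case $\varphi \U \psi$ is where the real work lies, and I expect it to be the main obstacle, since it forces one to coordinate the set-valued update functions with the order $f' < f$, which demands a simultaneous strict decrease of $\max$ on every trace of $T'$. For the direction $(\Leftarrow)$, given a classical witness $i_t$ for $t \models \varphi \U \psi$ at each $t \in T$, set $f(t) = \{i_t\}$; then $T[f,\infty] \models^l \psi$ by the induction hypothesis, and for any $f' < f$ every value in $f'(t)$ is strictly below $\max(f(t)) = i_t$, so each shifted trace satisfies $\varphi$ by the classical witness and hence $T'[f',\infty] \models^l \varphi$ by the induction hypothesis. For the direction $(\Rightarrow)$, given a witness $f$, fix $t_0 \in T$ and put $i_0 = \min(f(t_0))$: the induction hypothesis on $\psi$ gives $t_0[i_0,\infty] \models \psi$, and for each $j < i_0$ one constructs a single $f' < f$ with $f'(t_0) = \{j\}$, lowering $\max$ on the remaining traces of $T'$ to admissible values (possible exactly because $\max(f(t)) \geq 1$ there), so that the $\varphi$-clause and the induction hypothesis yield $t_0[j,\infty] \models \varphi$; thus $i_0$ is a classical $\varphi \U \psi$-witness for $t_0$. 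The only remaining bookkeeping is the degenerate subcase $i_0 = 0$ (where $t_0 \notin T'$ and the $\varphi$-clause is vacuous) and the handling of infinite $f(t)$ via the ``$\max$ exists'' proviso in the definition of $f' < f$; both are routine.
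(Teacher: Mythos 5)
Your proof is correct, but it takes a genuinely different route from the paper. The paper runs four separate structural inductions, one per property (downward closure, empty team property, flatness, singleton equivalence), with flatness invoking downward closure for one direction; you instead prove a single strengthened invariant $(\star)$ --- $T\models^l\varphi$ iff $t\models\varphi$ classically for every $t\in T$, i.e.\ flatness and singleton equivalence fused into one statement --- and read off all four properties as corollaries. Your handling of the one genuinely delicate case, $\varphi\U\psi$, is sound: the singleton-valued witness $f(t)=\{i_t\}$ works for $(\Leftarrow)$, and in $(\Rightarrow)$ the padding of $f'$ by $\{0\}$ on the traces of $T'$ other than $t_0$ is exactly what makes $f'<f$ admissible, since $\max(f(t))\neq 0$ there; the only blemish is the parenthetical claim that $i_0=0$ forces $t_0\notin T'$ (false if, say, $f(t_0)=\{0,5\}$), but this is harmless because when $i_0=0$ the classical witness $0$ works without appeal to the $\varphi$-clause at all. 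What the two approaches buy is different, and the paper is explicit about why it chose the heavier one: its downward-closure induction is carried out using only the \emph{weaker} hypothesis of downward closure of subformulas, precisely so that the same induction can later be extended with a case for $\clor$ in Theorem~\ref{lflat}(1), where $\teamltl^l(\clor)$ is downward closed but \emph{not} flat. Your monolithic invariant leans on full flatness of subformulas in the disjunction and until steps, so while it is shorter and cleaner as a proof of Theorem~\ref{thmdc} in isolation, it does not transfer to the non-flat extensions $\teamltl^l(\clor)$ and $\teamltl^l(\cneg)$ that the rest of the paper needs.
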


 The following example establishes that the new lax semantics differs from the strict semantics, and that in the old semantics multiplicities matter.
 Moreover, we obtain $\teamltl^l<\teamltl^l(\varovee)$ by showing that the latter is not flat.
 \begin{example}\label{ex1} Let $\phi$ be the formula $\G(p\ovee q)$, 
  $T_1 \dfn \{t\}$ and $T_2 \dfn \{(1,t),(2,t)\}$, where $t \dfn \{p\} \{q\}^{\omega}$. It is easy to check that $T_1\models \phi$ but 
   $T_1\not\models^l \phi$ (which is witnessed by $T[f,\infty]\not\models^l p\ovee q$ for $f(t) \dfn \{0,1\}$). Likewise $T_2\not\models \phi$. 
 Moreover $\{s_i\} \models^l \phi$, for $i\in\{1,2\}$, but $\{s_1,s_2\} \not\models^l \phi$, where  $s_1 \dfn \{p\}^{\omega}$ and $s_2 \dfn \{q\}^{\omega}$.
 \end{example}
 
We will also consider the following fragments of $\teamltl(\clor)$ and  $\teamltl(\sim)$.

\begin{definition}\label{def:leftflat}
A formula $\phi$ of  $\teamltl(\clor)$ is called \emph{left-flat}, if in  all of its subformulae of  the form $\G\psi$ and  $\psi\U\theta$, the subformula $\psi$ is an \LTL-formula. 
A formula $\phi$ of  $\teamltl(\sim, \ovee)$ is called \emph{left-downward closed}, if in  all of its subformulae of  the form $\G\psi$ and  $\psi\U\theta$, the subformula $\psi$ is an $\teamltl(\clor)$-formula. 
\end{definition}

We will later show that the above syntactic restriction for flatness could be replaced by a semantic restriction as well (see Corollary \ref{cor:flatness}). The proof of the following Theorem can be found in Appendix \ref{a:one}.
 
 \begin{restatable}{theorem}{leftflat}\label{lflat}
 For all $\varphi \in \teamltl^l(\clor)$ the following two claims hold:
\begin{enumerate}
\item\label{lflat_i} $\varphi$ is downward closed, and
\item\label{lflat_ii} if $\phi$ is left-flat, then $T\models \phi$ iff $\support(T)\models^l \phi$ for all multiteams $T$.
 \end{enumerate}
 \end{restatable}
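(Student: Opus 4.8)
The plan is to prove the two items by induction on the structure of $\varphi$, establishing item~\ref{lflat_i} first so that downward closure is available as a tool in the proof of item~\ref{lflat_ii}. Throughout, for a multiteam $T$ I write $U \dfn \support(T)$, and I use that lax truth of a multiteam is by definition lax truth of its support.

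For \textbf{downward closure} (item~\ref{lflat_i}), the induction is essentially the one already carried out for $\lteamltl$ in Theorem~\ref{thmdc}: the crucial point there is that every inductive step invokes the induction hypothesis only in the form of downward closure (not flatness), so the whole argument survives the passage to the non-flat logic $\teamltl^l(\clor)$. Hence the only genuinely new case is the Boolean disjunction, and it is immediate: if $T \models^l \varphi \clor \psi$ and $S \subseteq T$, then $T \models^l \varphi$ or $T \models^l \psi$, and downward closure of the subformulae yields $S \models^l \varphi$ or $S \models^l \psi$, whence $S \models^l \varphi \clor \psi$.

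For the \textbf{strict--lax collapse} (item~\ref{lflat_ii}) I prove, by induction on left-flat $\phi$, that $T \models \phi$ iff $U \models^l \phi$ for every multiteam $T$. The literal, conjunction, $\clor$, and $\X$ cases follow directly from the induction hypothesis together with $\support(T[1,\infty]) = U[1,\infty]$. The splitting disjunction $\vee$ is the first place where item~\ref{lflat_i} is used: for the direction $U \models^l \alpha \vee \beta \Rightarrow T \models \alpha \vee \beta$ I split $T$ into the copies whose trace lies in the $\alpha$-witness $U_1$ and the remaining copies; the supports of the two parts are subsets of $U_1$ and $U_2$, so downward closure plus the induction hypothesis turn the lax $\cup$-split into a strict $\uplus$-split. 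The converse direction merely pushes supports through the union and uses that $\support(T_1 \uplus T_2) = \support(T_1) \cup \support(T_2)$.

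The two temporal operators are the heart of the matter, and here \emph{left-flatness} is exactly what makes them tractable: in $\G\psi$ and $\psi \U \theta$ the left formula $\psi$ is an \LTL-formula, hence flat and satisfying singleton equivalence in both semantics. For $\G\psi$ one checks that both $T \models \G\psi$ and $U \models^l \G\psi$ reduce, via flatness of $\psi$, to the same pointwise condition that $t[n,\infty] \models \psi$ for every $t \in U$ and every $n \in \N$, so the two sides coincide. The real obstacle is the \textbf{until} case, where the set-valued timing functions $f\colon U \to \mathcal{P}(\N)^+$ of the lax semantics must be reconciled with the single-valued $f\colon T \to \N$ of the strict one, and where the definition of $f' < f$ (with its $\min/\max$ clauses and the proviso ``assuming $\max(f(t))$ exists'') must be carefully unwound. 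Because $\psi$ is \LTL and flat, the universally quantified premise ``$\forall f' < f$'' collapses, on both sides, to the single requirement that $t[j,\infty] \models \psi$ for every $j < \sup g(t)$ on each relevant trace $t$; verifying this collapse, including the case where $\max g(t)$ does not exist and $\psi$ is thereby forced at all suffixes, is the most delicate bookkeeping. The $\theta$-part is then absorbed by item~\ref{lflat_i}: for strict~$\Rightarrow$~lax I set $g(t) \dfn \{f((i,t)) \mid (i,t) \in T\}$, so that $U[g,\infty] = \support(T[f,\infty])$ and the induction hypothesis transfers the $\theta$-condition verbatim; for lax~$\Rightarrow$~strict I take $f((i,t)) \dfn \min g(t) \in g(t)$, whence $\support(T[f,\infty]) \subseteq U[g,\infty]$ and downward closure preserves $\theta$ under this shrinking, while the $\psi$-condition survives because any index $j < \min g(t)$ lies below $\sup g(t)$. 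The empty-team provisos are harmless, as the \LTL-formula $\psi$ enjoys the empty team property, as noted after the lax until definition.
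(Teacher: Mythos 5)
Your proposal is correct and follows essentially the same route as the paper's own proof: item~(\ref{lflat_i}) is obtained by extending the induction of Theorem~\ref{thmdc} with the immediate $\clor$ case, and in item~(\ref{lflat_ii}) your until constructions (collecting the values $f((i,t))$ over all copies into a set for the strict-to-lax direction, and using downward closure to pass to a single-valued timing function for the lax-to-strict direction) coincide with the paper's witnesses $F$ and $g$. The only difference is that you spell out cases (such as $\vee$ and $\G$) that the paper's proof leaves implicit as ``a simple induction''.
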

The restriction to left-flat formulae  in case \eqref{ii} above is necessary  by Example \ref{ex1}.

 \section{Normal Form Theorems}
 In this section we develop normal forms for our logics, which we then utilise to obtain strong expressivity and complexity results.

 \subsection{TeamLTL with the Boolean disjunction}
 
  \begin{definition}A formula $\phi$ is in $\ovee$-disjunctive normal form if it is of the form
 \(
 \Clor_{i\in I} \alpha_i, 
 \)
 where $\alpha_i$ are \LTL-formulae.
\end{definition}
We will next show that every formula of   $\teamltl^{l}(\ovee)$ can be transformed into equivalent $\ovee$-disjunctive normal form. This result is similar to the one proved in \cite{Virtema17} for team-based modal logic $\ML(\ovee)$.

\begin{theorem}\label{disjnf} Every formula  $\phi\in \teamltl^{l}(\ovee)$ is logically equivalent to a formula $\phi^*= \Clor_{i\in I} \alpha_i$ in $\ovee$-disjunctive normal form, where $\lvert \alpha_i \rvert \leq \lvert \phi \rvert$ and  $\lvert I \rvert = 2^k$,  where $k$ is the number of $\clor$ in $\varphi$. 
 \end{theorem}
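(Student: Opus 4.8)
The plan is to argue by induction on the structure of $\phi$, at each step pulling the Boolean disjunction $\clor$ to the front and checking that what is left inside each disjunct is (equivalent to) an $\LTL$-formula. The driving observation is that $\clor$ is a team-level Boolean connective, $T\models^l\psi\clor\theta$ iff $T\models^l\psi$ or $T\models^l\theta$, so for every other operator of the logic one expects a distributive law that moves $\clor$ outward; the real content of the theorem is that the inner parts stay flat. For the base case, a $\clor$-free formula is syntactically an $\LTL$-formula, and by flatness together with singleton equivalence (Theorem~\ref{thmdc}) it is equivalent, as a $\lteamltl$-formula, to that same formula read in ordinary $\LTL$; this is a normal form with a single disjunct, matching $\lvert I\rvert=2^0$.

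For the inductive step I would establish the distributive equivalences below, where by the induction hypothesis $\psi\equiv\Clor_{i\in I}\alpha_i$ and $\theta\equiv\Clor_{j\in J}\beta_j$ with the $\alpha_i,\beta_j$ being $\LTL$-formulae. The connective $\clor$ itself is trivial: $\psi\clor\theta\equiv\Clor_{k\in I\sqcup J}\gamma_k$, where the $\gamma_k$ enumerate the $\alpha_i$ and $\beta_j$. For $\X$ one has $\X\Clor_i\alpha_i\equiv\Clor_i\X\alpha_i$, directly from the semantics of $\X$. For $\wedge$ and $\vee$ one has $\psi\wedge\theta\equiv\Clor_{i,j}(\alpha_i\wedge\beta_j)$ and $\psi\vee\theta\equiv\Clor_{i,j}(\alpha_i\vee\beta_j)$; these use that a finite $\clor$ commutes both with team-splitting and with conjunction, and that the splitting disjunction (resp.\ conjunction) of two $\LTL$-formulae is again an $\LTL$-formula, since $\teamltl^l$ is flat. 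In each of these cases $\alpha_i\wedge\beta_j$, $\alpha_i\vee\beta_j$ and $\X\alpha_i$ are $\LTL$-formulae of size at most $\lvert\phi\rvert$, and the disjunct count works out to $2^k$: it multiplies for $\wedge,\vee$ (giving $2^{k_\psi}\cdot 2^{k_\theta}=2^{k}$) and adds for $\clor$, where $2^{k_\psi}+2^{k_\theta}\le 2^{k_\psi+k_\theta+1}=2^{k}$ is padded up to exactly $2^k$ by repeating a disjunct, using idempotence of $\clor$.

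The main obstacle is the temporal operators $\G$ and $\U$, for which I would prove $\G\Clor_i\alpha_i\equiv\Clor_i\G\alpha_i$ and $\bigl(\Clor_i\alpha_i\bigr)\U\bigl(\Clor_j\beta_j\bigr)\equiv\Clor_{i,j}(\alpha_i\U\beta_j)$. The difficulty is that the semantics of these operators places a universal quantifier over temporal-update functions ($\forall f$ for $\G$, and $\forall f'<f$ for $\U$) \emph{above} the existential choice of a disjunct hidden in $\Clor_i\alpha_i$, so pulling $\clor$ out amounts to turning a $\forall f\,\exists i$ statement into an $\exists i\,\forall f$ one. I would resolve this by a maximality argument combined with the downward closure of $\LTL$-formulae (Theorems~\ref{thmdc} and~\ref{lflat}). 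For $\G$, the update $f^{\ast}(t)=\N$ dominates every $f$, as $T[f,\infty]\subseteq T[f^{\ast},\infty]$; the disjunct $\alpha_{i_0}$ that holds at $f^{\ast}$ then holds at every $f$ by downward closure, giving the required uniform witness. For $\U$, given a witnessing $f$ with $T[f,\infty]\models^l\beta_{j_0}$, I would introduce the maximal admissible intermediate function $f'_{\top}(t)=\{\,s\in\N : s<\max(f(t))\,\}$ (and $f'_{\top}(t)=\N$ where $\max(f(t))$ does not exist), verify $f'_{\top}<f$, and show $T'[f',\infty]\subseteq T'[f'_{\top},\infty]$ for every $f'<f$; the disjunct $\alpha_{i_0}$ valid at $f'_{\top}$ then works for all $f'<f$ by downward closure, whence $T\models^l\alpha_{i_0}\U\beta_{j_0}$ and thus $T\models^l\Clor_{i,j}(\alpha_i\U\beta_j)$ (the $T'=\emptyset$ clause being harmless, as the left subformula $\alpha_{i_0}$ has the empty team property). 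The reverse inclusions in both cases are immediate, since a single-disjunct witness is at once a witness for the $\clor$ of disjuncts. As before, $\G\alpha_i$ and $\alpha_i\U\beta_j$ are $\LTL$-formulae of size at most $\lvert\phi\rvert$, and the disjunct counts are $2^{k_\psi}$ and $2^{k_\psi+k_\theta}=2^k$ respectively, which completes the induction.
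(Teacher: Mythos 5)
Your proposal is correct and follows essentially the same route as the paper's proof: an induction that distributes $\clor$ over each connective, with the key steps for $\G$ and $\U$ resolved exactly as the paper does, via the maximal update $f(t)=\N$ (respectively the maximal intermediate function below the witnessing $f$) together with downward closure of the flat disjuncts to turn the $\forall f\,\exists i$ pattern into $\exists i\,\forall f$. The only differences are cosmetic: you obtain the exact count $\lvert I\rvert = 2^k$ by padding with idempotence, whereas the paper reads it off from the reformulation $\Clor_{f\in F}\phi^f$ over selection functions, and you are slightly more explicit about the $T'=\emptyset$ clause and the nonemptiness of the maximal intermediate function.
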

 \begin{proof}
 The proof proceeds by induction on the structure of formulae. Note that  atomic formulae are already in the normal form and that the case for $\ovee$ is trivial. The remaining cases are defined as follows:
  \begin{align*}
    (\psi \wedge \theta)^*  \dfn& \Clor_{i\in I ,j\in J} (\alpha^\psi_i \wedge \alpha^\theta_j) &&&
    (\psi \vee \theta)^*    \dfn& \Clor_{i\in I ,j\in J}  (\alpha^\psi_i \vee \alpha^\theta_j) \\
    (\X\psi)^*              \dfn& \Clor_{i\in I} \X\alpha^\psi_i  &&& 
    (\G\psi)^*              \dfn& \Clor_{i\in I} \G\alpha^\psi_i \\
    (\psi \U \theta)^*      \dfn& \Clor_{i\in I,j\in J} (\alpha^\psi_i \U \alpha^\theta_j). &&&&
  \end{align*}
where $\alpha^\psi_i$ and $\alpha^\theta_j$ are the flat formulae in the disjunctive normal forms of $\psi$ and $\theta$ respectively, and $I$ and $J$ are the respective index sets.

 Suppose $\phi = \psi \wedge \theta$ and that $\psi\equiv \Clor_{i\in I} \alpha^{\psi}_i$ and $\theta\equiv  \Clor_{i\in J} \alpha^{\theta}_j$  (induction hypothesis). Now $T \models^l \phi$ if and only if $T \models^l \psi$ and $T \models^l \theta$.  
 The latter holds, if and only if $T \models^l \alpha^\psi_k$ and $T \models^l \alpha^\theta_l$, for some $k$ and $l$. This can be equivalently expressed as $T \models^l \Clor_{i,j} (\alpha^\psi_i \wedge \alpha^\theta_j)$, i.e. $T \models^l \phi^*$.
   
   Suppose $\phi = \psi \vee \theta$ and  that $\psi\equiv \Clor_{i\in I} \alpha^{\psi}_i$ and $\theta\equiv  \Clor_{i\in J} \alpha^{\theta}_j$. By definition $T \models^l \phi$ if and only if there exists $T'\cup T'' = T$ such that $T'\models^l \psi$ and $T'' \models^l \theta$. By the induction hypothesis the latter is equivalent with $T' \models^l \Clor_{i\in I} \alpha^\psi_i$ and $T'' \models^l \Clor_{j\in J} \alpha^\theta_j$. By definition this holds if and only if there are $k$ and $l$ such that $T' \models^l \alpha^\psi_k$ and $T'' \models^l \alpha^\theta_l$, which is equivalent with $T \models^l \alpha^\psi_k \vee \alpha^\theta_l$, by definition. Equivalently then $T \models^l \Clor_{i\in I,j\in J} (\alpha^\psi_i \vee \alpha^\theta_j)$.
   
     Suppose $\phi = \X\psi$ and  that $\psi\equiv \Clor_{i\in I} \alpha^{\psi}_i$. By definition $T \models^l \phi$ is equivalent with $T[1,\infty] \models^l \psi$. By the induction hypothesis the latter holds if and only if $T[1,\infty] \models^l\Clor_{i\in I} \alpha^{\psi}_i$, which by definition is equivalent with $T[1,\infty] \models^l \alpha^\psi_k$ for some $k \in I$. The latter holds if and only if $T \models^l \X \alpha^\psi_k$ for some $k\in I$, which is equivalent with $T \models^l \Clor_{i\in I} \X \alpha^\psi_i$.

     Suppose $\phi = \G\psi$ and  that $\psi\equiv \Clor_{i\in I} \alpha^{\psi}_i$. Suppose that $T \models^l \phi$. By definition for all functions $f\colon T\to \mathcal{P}(\mathbb{N})^+$ it holds that $T[f,\infty] \models^l \psi$. By the induction hypothesis $T[f,\infty] \models^l \Clor_{i\in I} \alpha^\psi_i$ for all $f$. Especially this holds for the total function defined for every $t\in T$ by  $f_{max}(t) \dfn \N$.  
     Thus $T[f_{max},\infty] \models^l \alpha^\psi_k$ for some $k$. By downward closure it holds that $T[f',\infty] \models^l \alpha^\psi_k$ for all $f' \leq f_{max}$. Hence $T \models^l \G\alpha^\psi_k$, and thus $T \models^l \Clor_{i\in I} \G\alpha^\psi_i$. The other direction is analogous.
     
     Suppose $\phi = \psi \U \theta$ and that  $\psi\equiv \Clor_{i\in I} \alpha^{\psi}_i$   and $\theta\equiv  \Clor_{j\in J} \alpha^{\theta}_j$. Suppose $T \models^l \phi$. By definition there exists a function $f\colon T\rightarrow \mathcal{P}(\mathbb{N})^+$ such that $T[f,\infty] \models^{l} \theta$ and for all functions  $f'\colon T'\rightarrow \mathcal{P}(\mathbb{N})^+$ such that $f'< f$, $T'[f',\infty] \models^{l} \psi$, where $T' \dfn \{t\in T \mid f(t) \neq 0\}$.
  Hence by the induction hypothesis $T[f,\infty] \models^l \Clor_{j\in J} \alpha^{\psi}_j$, which is equivalent with $T[f,\infty] \models^l \alpha^\psi_k$ for some $k \in J$, and, for the function $f_{max}(t) \dfn \{ n \in \N \mid n < m, \textrm{ for some }m\in f(t) \}$, it holds that $T[f_{max},\infty] \models^l \Clor_{i\in I} \alpha^{\theta}_i$, which in turn is equivalent with $T[f_{max},\infty] \models^l \alpha^\psi_l$ for some $l \in I$. By downward closure the latter holds for all intermediary functions, and thus $T \models^l \alpha^\theta_k\U\alpha^\psi_l$ and finally 
 $T \models^l \Clor_{i\in I,j\in J} (\alpha^\theta_i \U \alpha^\psi_j)$ as wanted. The other direction is analogous.

In order to show the size estimates stated in the theorem, it suffices to note that our translation to  $\ovee$-disjunctive normal from can be equivalently stated as follows: 
\[\phi\equiv \Clor_{i\in I} \alpha^{\psi}_i= \Clor_{f\in F} \phi^f,\]
where  $F$ is  the set of all selection functions $f$ that select, separately for each occurrence, either the left disjunct $\psi$ or the
right disjunct $\theta$ of each subformula of the form $\psi\ovee \theta$ of $\phi$, and $\phi^f$ denotes the formula obtained from $\phi$ by substituting
each occurrence of a subformula of type $(\psi\ovee \theta)$ by $f(\psi\ovee \theta)$. The size estimates follow immediately from this observation.
 \end{proof}

Proofs for the following two corollaries can be found in Appendix \ref{a:one}.

\begin{restatable}{corollary}{corflatness}\label{cor:flatness}
For every flat $\teamltl^l(\clor)$-formula there exists an equivalent $\teamltl^l$-formula.
\end{restatable}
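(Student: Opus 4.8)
The plan is to combine the normal form theorem with flatness and singleton equivalence. Since $\clor$ and $\ovee$ denote the same Boolean disjunction, a flat $\teamltl^l(\clor)$-formula $\phi$ is in particular a $\teamltl^l(\ovee)$-formula, so I would first apply Theorem \ref{disjnf} to obtain an equivalent formula $\phi^\ast = \Clor_{i\in I}\alpha_i$ in $\ovee$-disjunctive normal form, where each $\alpha_i$ is an \LTL-formula. The idea is then to "collapse" the Boolean disjunction into the ordinary flat \LTL-disjunction: I would take as the candidate $\teamltl^l$-formula the classical \LTL-disjunction $\psi \dfn \bigvee_{i\in I}\alpha_i$, which is an ordinary \LTL-formula and hence a $\teamltl^l$-formula.

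Next I would verify $\phi\equiv\psi$ by reducing everything to singletons. Evaluating on a singleton $\{t\}$, the semantics of $\clor$ gives $\{t\}\models^l\phi^\ast$ iff $\{t\}\models^l\alpha_i$ for some $i\in I$, and by singleton equivalence (Theorem \ref{thmdc}) this holds iff $t\models\alpha_i$ for some $i$, i.e. iff $t\models\psi$, i.e. iff $\{t\}\models^l\psi$. Thus $\phi$ and $\psi$ agree on all singletons. Now I would invoke flatness to lift this to arbitrary teams: $\phi$ is flat by hypothesis and $\psi$ is flat by Theorem \ref{thmdc}, so both are fully determined by their values on singletons. Consequently $T\models^l\phi$ iff $\{t\}\models^l\phi$ for all $t\in T$ iff $\{t\}\models^l\psi$ for all $t\in T$ iff $T\models^l\psi$, for every team $T$, which establishes the equivalence.

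The point I would stress as the crux is that the normal form $\phi^\ast$ is in general \emph{not} flat (e.g.\ $\G(p\ovee q)$ is not flat by Example \ref{ex1}), so the flatness hypothesis on $\phi$ is genuinely used and cannot be dropped: it is precisely flatness that licenses replacing the non-flat Boolean disjunction $\clor$ by the flat team disjunction $\vee$ (equivalently, the classical \LTL-disjunction) without altering the team-level truth value. Everything else is routine unwinding of the definitions and an appeal to the properties already collected in Theorem \ref{thmdc}.
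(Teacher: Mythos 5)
Your proof is correct and takes essentially the same route as the paper's: apply Theorem~\ref{disjnf}, pass from $\Clor_{i}\alpha_i$ to the \LTL-disjunction $\bigvee_{i}\alpha_i$ by checking they agree on singletons, and lift the singleton-level agreement to arbitrary teams using the flatness hypothesis on $\phi$ together with flatness of $\teamltl^l$ (Theorem~\ref{thmdc}). The only cosmetic difference is that you justify the singleton interchange of $\Clor$ and $\vee$ via singleton equivalence and classical \LTL semantics, whereas the paper appeals to the empty team property; both justifications are valid.
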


\begin{restatable}{corollary}{corlaxstrict}\label{cor:laxstrict}
$\teamltl^{l}(\clor) < \teamltl(\clor)$
\end{restatable}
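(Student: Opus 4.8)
The plan is to establish the two halves of the strict inequality separately: first that every $\teamltl^l(\clor)$-formula has an equivalent in $\teamltl(\clor)$ read under strict semantics, and second that some $\teamltl(\clor)$-formula has no lax equivalent. Throughout I compare the two logics over multiteams, using the stipulation $T \models^l \varphi \iff \support(T) \models^l \varphi$ to evaluate lax formulae on multiteams, so that both logics define classes of multiteams.

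For the inclusion $\teamltl^l(\clor) \le \teamltl(\clor)$ I would invoke the normal form of Theorem \ref{disjnf}: a given $\phi \in \teamltl^l(\clor)$ is lax-equivalent to some $\phi^* = \Clor_{i \in I} \alpha_i$ with each $\alpha_i$ an \LTL-formula. I then read $\phi^*$ under strict semantics and argue it defines the same class of multiteams as $\phi$ does under lax semantics. The key observation is that both readings reduce to the same support condition. On the lax side, $T \models^l \phi^*$ unfolds, by the semantics of $\clor$ together with the flatness and singleton equivalence of \LTL-formulae in lax semantics (Theorem \ref{thmdc}), to the condition that there exists $i$ such that every $t \in \support(T)$ satisfies $\alpha_i$. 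On the strict side, $T \models \phi^*$ unfolds, by the flatness of strict $\teamltl$, to the condition that there exists $i$ such that every trace of $T$ (counting copies) satisfies $\alpha_i$; since the $\alpha_i$ are ordinary \LTL-formulae this universal condition is insensitive to multiplicities and hence identical to the support condition above. Thus $\phi^*$ is the required strict formula.

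For strictness I would exhibit a single strict formula with no lax equivalent, and Example \ref{ex1} hands it to me: the formula $\phi = \G(p \ovee q)$ satisfies $T_1 \models \phi$ and $T_2 \not\models \phi$ under strict semantics, where $T_1 = \{t\}$, $T_2 = \{(1,t),(2,t)\}$ and $t = \{p\}\{q\}^\omega$. Since $\support(T_1) = \support(T_2) = \{t\}$, no lax formula can separate $T_1$ from $T_2$: for any $\psi \in \teamltl^l(\clor)$ the support stipulation gives $T_1 \models^l \psi \iff \support(T_1) \models^l \psi \iff \support(T_2) \models^l \psi \iff T_2 \models^l \psi$. Hence $\phi$ is not lax-equivalent to any $\teamltl^l(\clor)$-formula, which together with the inclusion yields $\teamltl^l(\clor) < \teamltl(\clor)$.

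The only point requiring genuine care is the inclusion direction, where one must confirm that reading the normal form $\Clor_{i\in I} \alpha_i$ under the two different semantics really produces the same predicate on multiteams; this is exactly where the multiplicity-insensitivity of the purely \LTL disjuncts enters, and it is the reason the argument goes through for the Boolean-disjunction fragment precisely because Theorem \ref{disjnf} guarantees flat disjuncts. The strictness direction is essentially immediate once one observes that lax satisfaction depends only on the support, so the conceptual content of the corollary is that strict semantics can detect the multiplicity distinction witnessed in Example \ref{ex1} while lax semantics provably cannot.
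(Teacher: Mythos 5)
Your proof is correct for the corollary as the paper formalises it (equivalence over multiteams, with lax satisfaction extended via the support stipulation), and your inclusion half is essentially the paper's own: both pass through Theorem \ref{disjnf} and observe that, the disjuncts $\alpha_i$ being flat \LTL-formulae, $\Clor_i\alpha_i$ evaluates the same way under strict semantics on $T$ as under lax semantics on $\support(T)$ --- the paper cites Theorem \ref{lflat}(\ref{lflat_ii}) where you unfold flatness and singleton equivalence by hand. The strictness half, however, takes a genuinely different route. You exploit the stipulation $T\lmodels\psi \Leftrightarrow \support(T)\lmodels\psi$: lax satisfaction is multiplicity-blind by definition, and Example \ref{ex1} shows that strict $\G(p\ovee q)$ separates $\{t\}$ from $\{(1,t),(2,t)\}$, which share the same support. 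The paper instead supposes a lax equivalent $\Clor_{i\leq n}\alpha_i$ exists, evaluates it on $n+1$ pairwise \emph{distinct} traces $t_i=\{p\}^i\{q\}^\omega$, and derives a contradiction via the pigeonhole principle and flatness of the $\alpha_i$: some two-element \emph{set} $\{t_k,t_l\}$ would then have to satisfy $\G(p\ovee q)$ strictly, which fails. The trade-off is this: your argument is shorter and pinpoints the definitional reason for the gap, but it shows only that strict $\G(p\ovee q)$ is not support-invariant, so the separation you obtain is an artifact of the multiteam stipulation and says nothing if one instead compares the two logics over set-teams, where every trace occurs exactly once and support-invariance is vacuous. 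The paper's pigeonhole argument establishes the stronger fact that even restricted to sets of traces, strict $\G(p\ovee q)$ is not lax-definable; hence its separation is robust under either way of putting a set-based and a multiset-based logic on a common footing, whereas yours holds only under the multiteam reading.
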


\subsection{TeamLTL with the Boolean negation}
 
A normal form, similar to the one in Theorem \ref{disjnf}, can also be obtained for \teamltl extended with the Boolean negation, however, since this extension is no longer downward closed, it only holds for a specific fragment of the logic.
 The following normal form has been introduced and used in \cite{Luck18a,Luck18b} to analyse the complexity of modal team logic and $\FO^2$ in the team semantics context.
 Below  $\phi^d$  denotes  a formula obtained by transforming $\neg \phi$ into negation normal form in the standard way in \LTL.

 \begin{definition}\label{def:quasiflat}
 A formula $\phi$ is quasi-flat if $\phi$ is of the form:
 \[ \Clor_{i\in I} (\alpha_i\wedge \bigwedge_{j\in J_i} \exists \beta_{i,j}),  \]
where $\alpha_i$ and  $\beta_{i,j}$ are \LTL-formulae, and $\exists \beta_{i,j}$ is an abbreviation for the formula $\sim \beta^d_{i,j}$.
\end{definition} 
Note that, for \LTL-formulae $\alpha$ and $\beta$, we have $T\models^l \alpha$ if and only if $t\models \alpha$, for all $t\in T$. Moreover $T\models^l \exists \beta$, if and only if there exists some trace $t\in T$ such that $t \models \beta$.
 
 \begin{theorem}\label{qfnf} Every left-downward closed formula $\phi\in \teamltl^{l}(\sim, \ovee)$ is logically equivalent to a quasi-flat formula $\phi^*$. 
 \end{theorem}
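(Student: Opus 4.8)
The plan is to prove this by induction on the structure of $\phi$, showing that the class of formulae logically equivalent to a quasi-flat formula is closed under every connective of $\teamltl^l(\cneg,\ovee)$. The base case, a literal, is already quasi-flat (a single disjunct with empty existential part). I would first note that left-downward closedness is hereditary, so the immediate subformulae of a left-dc formula are again left-dc and the induction hypothesis applies to them. For the two genuinely temporal operators $\G\psi$ and $\psi\U\theta$ the left argument $\psi$ is by assumption a $\teamltl(\clor)$-formula, so by Theorem \ref{disjnf} I may assume $\psi=\Clor_{i}\alpha_i$ with each $\alpha_i$ an \LTL-formula; crucially $\psi$ is then downward closed by Theorem \ref{lflat}\eqref{lflat_i}.

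The purely Boolean cases are routine manipulations of quasi-flat forms. For $\ovee$ one concatenates the two disjunctions. For $\wedge$ one distributes $\ovee$ over $\wedge$ (both act on the whole team, so the distributive law over truth values holds) and merges the \LTL-conjuncts ($\alpha_i\wedge\gamma_k$ is again \LTL) and the lists of existential conjuncts. For $\cneg$ I would use the De Morgan laws $\cneg(\psi\ovee\theta)\equiv\cneg\psi\wedge\cneg\theta$ and $\cneg(\psi\wedge\theta)\equiv\cneg\psi\ovee\cneg\theta$ together with the identities $\cneg\alpha\equiv\exists\alpha^d$ (valid since $T\not\models^l\alpha$ iff some trace satisfies $\neg\alpha$) and $\cneg\exists\beta\equiv\beta^d$, reducing $\cneg$ of a quasi-flat formula to a finite conjunction of quasi-flat formulae, which is quasi-flat by the $\wedge$-case. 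The split disjunction $\vee$ is handled by a maximal-subteam argument: given quasi-flat $\phi_1\equiv\Clor_i(\alpha_i\wedge\bigwedge_j\exists\beta_{i,j})$ and $\phi_2\equiv\Clor_k(\gamma_k\wedge\bigwedge_l\exists\delta_{k,l})$, one checks that the split $T_1=\{t\in T\mid t\models\alpha_i\}$, $T_2=\{t\in T\mid t\models\gamma_k\}$ is optimal, giving
\[
\phi_1\vee\phi_2\;\equiv\;\Clor_{i,k}\Big((\alpha_i\vee\gamma_k)\wedge\bigwedge_{j}\exists(\alpha_i\wedge\beta_{i,j})\wedge\bigwedge_{l}\exists(\gamma_k\wedge\delta_{k,l})\Big).
\]
Finally $\X$ commutes with $\ovee$, $\wedge$ and $\exists$, so $\X$ of a quasi-flat formula is obtained by prefixing $\X$ to each \LTL-component.

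The two temporal operators are where downward closure of the left argument does the real work. For $\G\psi$ with $\psi=\Clor_i\alpha_i$, taking the maximal function $f_{\max}(t)\dfn\N$ (so $T[f,\infty]\subseteq T[f_{\max},\infty]$ for every $f$) together with downward closure lets me swap the quantifiers and obtain $\G\Clor_i\alpha_i\equiv\Clor_i\G\alpha_i$, which is quasi-flat. For $\psi\U\theta$ I would first distribute the until over the Boolean disjunction of the quasi-flat form of $\theta$, i.e. $\psi\U\Clor_k\chi_k\equiv\Clor_k(\psi\U\chi_k)$ (the intermediate $\psi$-condition does not depend on $\theta$, so the witnessing function splits), and then treat a single conjunctive disjunct $\chi=\gamma\wedge\bigwedge_l\exists\delta_l$. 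The key claim is
\[
(\Clor_i\alpha_i)\U\chi\;\equiv\;\Clor_i\Big((\alpha_i\U\gamma)\wedge\bigwedge_{l}\exists\big(\alpha_i\U(\gamma\wedge\delta_l)\big)\Big),
\]
proved by using downward closure of $\Clor_i\alpha_i$ to reduce the universal intermediate condition to the single maximal intermediate team $\{t[s,\infty]\mid t\in T',\,s<\max(f(t))\}$, fixing one index $i$, and choosing for each trace a finite target set consisting of its first $\gamma$-point together with one $(\gamma\wedge\delta_l)$-witness per $l$.

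I expect the until case to be the main obstacle. The difficulties are bookkeeping the order relation $f'<f$ on functions into $\mathcal P(\N)^+$ and verifying that the maximal intermediate team really captures the universal condition; handling target sets $f(t)$ that are infinite (so that $\max(f(t))$ does not exist) by shrinking them to a finite subset that still carries the required $\gamma$-target and the finitely many $\delta_l$-witnesses, while only weakening the intermediate condition; and checking that a single index $i$ can be made to serve all traces and all witnesses at once, which is exactly what downward closure of the left argument guarantees.
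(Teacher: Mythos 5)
Your proposal is correct and takes essentially the same route as the paper's own proof: the same structural induction, the same quasi-flat manipulations for $\wedge$, $\vee$, $\cneg$ and $\X$, the same reduction of the $\G$ case to $\Clor_i \G\alpha_i$ via Theorem~\ref{disjnf} and downward closure, and for $\U$ the same key equivalence between $\alpha_i \U (\gamma\wedge\bigwedge_l\exists\delta_l)$ and $(\alpha_i\U\gamma)\wedge\bigwedge_l\exists(\alpha_i\U(\gamma\wedge\delta_l))$, proved by the same witness-point extension of the until-witnessing function. The only differences are presentational: you split the until case into first distributing over the right-hand $\Clor$, and you explicitly flag the bookkeeping for infinite target sets, which the paper leaves implicit.
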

 \begin{proof}
   Proof by induction over the structure of $\phi$.      Atoms are flat, and hence are in the normal form. The translations and the proofs of correctness for the cases of conjunction, disjunction, and Boolean negation are analogous to the simpler modal framework of  \cite{Luck18a,Luck18b}. 
     
     Suppose $\varphi = \psi \wedge \theta$ and assume that $\psi$ is equivalent to 
     $ \Clor_{i\in I} (\alpha^{\psi}_i\wedge \bigwedge_{j\in J_i} \exists \beta^{\psi}_{i,j})$ and $\theta$ to 
    $   \Clor_{i\in I'} (\alpha^{\theta}_i\wedge \bigwedge_{j\in J'_i} \exists \beta^{\theta}_{i,j})$. By the distributive laws of conjunction and disjunction,  $\phi $ is clearly equivalent to \[\Clor_{i\in I,k\in I'} (\alpha^\psi_i\wedge\alpha^\theta_k\wedge\bigwedge_{j\in J_i} \exists \beta_{i,j}\wedge\bigwedge_{j\in J'_k} \exists \beta_{k,j}).\]

     Suppose $\varphi = \psi \vee \theta$. By the induction hypothesis and an argument analogous to the disjunction case of the proof of Theorem \ref{disjnf}, $\phi$ is equivalent to 
     \begin{align}\label{qfnf:dsj}
     \Clor_{i\in I,k\in I'} \big((\alpha^\psi_i\wedge\bigwedge_{j\in J_i} \exists \beta^\psi_{i,j}) \vee ( \alpha^\theta_k\wedge \bigwedge_{j\in J'_j} \exists \beta^\theta_{k,j})\big).
     \end{align}
      The above formula expresses that $T$ can be split into two parts: $T_1$ in which each trace satisfies $\alpha_i$ and the subformulae $\beta_{i,j}$ are satisfied by some traces, and $T_2$ in which each trace satisfies $\alpha_k$ and the subformulae $\beta_{k,j}$ are satisfied by some traces. But this is equivalent to saying that $T$ can be split into two parts: $T_1$ in which each trace satisfies $\alpha_i$, and $T_2$ in which each trace satisfies $\alpha_k$; and the subformulae $\alpha_i \wedge \beta_{i,j}$ and $\alpha_k \wedge \beta_{k,j}$ are satisfied by some traces in $T$, and thus the formula \eqref{qfnf:dsj} is equivalent with
      \[\Clor_{i\in I,k\in I'} \big((\alpha^\psi_i\vee\alpha^\theta_k)\wedge \bigwedge_{j \in J_i} \exists(\alpha^\psi_i\wedge \beta^\psi_{i,j})\wedge \bigwedge_{j \in J'_k} \exists(\alpha^\theta_j\wedge \beta^\theta_{k,j})\big)\]
       that is in the normal form.

    Suppose $\varphi = \sim\psi$ and assume that $\psi$ is equivalent to 
     $ \Clor_{i\in I} (\alpha_i\wedge \bigwedge_{j\in J_i} \exists \beta_{i,j}).$  Now $\varphi$ is clearly  equivalent to 
     \(
     \bigwedge_{i\in I} (\exists\alpha^d_i\clor \Clor_{j\in J_i} \beta^d_{i,j}).
     \)
     This formula can be expanded  back to the  normal form with exponential blow-up using  the distributivity law of propositional logic.
     
     Suppose $\varphi = \X\psi$ and assume that $\psi$ is equivalent to 
     $ \Clor_{i\in I} (\alpha_i\wedge \bigwedge_{j\in J_i} \exists \beta_{i,j})$.  It is now easy to check that  $\varphi$ is equivalent to $\Clor_{i\in I}( \X\alpha_i\wedge \bigwedge_{j\in J_i} \exists\X\beta_{i,j})$. 
    
    Suppose  $\varphi = \G\psi$. Since $\varphi$ was assumed to be left-downward closed,  $\psi$ is equivalent with a formula of the form $\ovee_i\alpha$. Now we can transform $\varphi$ to the normal form as in the previous theorem.   
   
Suppose $\varphi = \psi\U\theta$. By assumption $\varphi $ is left-downward closed hence $\psi$ is equivalent with a formula of the form $\ovee_{i\in I} \alpha^\psi_i$ (by the previous theorem) and $\theta$ is equivalent to 
     $ \Clor_{k\in I'} (\alpha^{\theta}_k\wedge \bigwedge_{j\in J_k} \exists \beta^{\theta}_{k,j})$. Now using the fact that $\psi$ is downward closed,  it is easy to see that  $\varphi$ is logically equivalent with the formula:
    \begin{align}\label{eqnf1}
        \Clor_{i\in I, k\in I'} \big(\alpha^{\psi}_i \U (\alpha^{\theta}_k\wedge \bigwedge_{j\in J_k} \exists \beta^{\psi}_{k,j})\big).  
    \end{align}
 It now suffices to show that the disjuncts (for any $i\in I,k\in I'$) of \eqref{eqnf1} can be equivalently expressed as:
\begin{align}\label{eqnf2}
  \big(\alpha^{\psi}_i\U\alpha^{\theta}_k\wedge  \bigwedge_{j\in J_k}\exists(\alpha^{\psi}_i\U(\alpha^{\theta}_k\wedge  \beta^{\psi}_{k,j})\big).   
  \end{align}
  We will show the logical implication from \eqref{eqnf2} to \eqref{eqnf1}. Assume 
  \[T\models^l    \big(\alpha^{\psi}_i\U\alpha^{\theta}_k\wedge  \bigwedge_{j\in J_k} \exists( \alpha^{\psi}_i\U(\alpha^{\theta}_k\wedge  \beta^{\psi}_{k,j})\big).     \]   
  Let $f$ be such that $T[f,\infty]\models^l   \alpha^{\theta}_k$ and that  $T[g,\infty]\models^l   \alpha^{\psi}_i$, for all $g<f$. In order to show  
   \begin{equation}\label{eqnf3}
   T\models^l  \alpha^{\psi}_i \U (\alpha^{\theta}_k\wedge \bigwedge_{j\in J_k} \exists \beta^{\psi}_{k,j}),  
   \end{equation}
we need to make sure that traces witnessing the truth of the formulae $\exists \beta^{\psi}_{k,j}$ can be found in $T[f,\infty]$. Here we can use the assumption that $T\models^l  \bigwedge_{j\in J_k} \exists( \alpha^{\psi}_i\U(\alpha^{\theta}_k\wedge  \beta^{\psi}_{k,j}))$
 implying that for each $j\in J_k$ there exists $t_j\in T$ such that $t_j\models  \alpha^{\psi}_i\U(\alpha^{\theta}_k\wedge  \beta^{\psi}_{k,j})$. Let now  $n_j$ be  such that 
  $t_j[n_j,\infty]\models \alpha^{\theta}_k\wedge  \beta^{\psi}_{k,j} $ and that   $t_j[l,\infty]\models \alpha^{\psi}_i$ for all $l<n_j$. Now by the flatness of the formulae $\alpha^{\psi}_i, \alpha^{\theta} _k$, and $\beta^{\psi}_{k,j}$, the function $f'$ defined by
 \begin{equation*}
    f'(t) \dfn
    \begin{cases*}
       f(t)\cup\{t_j[n_j, \infty]\} & if $t=t_j$, for some $j\in J_k$  \\
         f(t)    & otherwise
    \end{cases*}
  \end{equation*}
 witnesses \eqref{eqnf3}. The converse is proved analogously.
  \end{proof}
 
The following example indicates that the restriction to left-downward closed formulae is necessary for the proof to work in the above theorem.

 \begin{example}\label{ex2} Let $\phi$ be the formula $\G(\exists p_1\ovee \exists p_2)$ 
  and $T\dfn\{t\}$, where $ t\dfn (\{p_1\}\{p_2\} )^{\omega} $. It is now easy to check that $T\models^l \phi$ but 
   $T\not\models^l \G\exists p_i$ for $i\in \{1,2\}$.

 \end{example}

\section{Computational Properties}
In this section we analyse the computational properties of the logics studied in the previous section. 
We  focus on the complexity of the model checking and satisfiability problems.

For the model checking problem one has to determine  whether a  team of traces generated by a given finite Kripke structure satisfies a given formula. We consider Kripke structures of the form $\kK \dfn (W, R, \eta , w_0)$, where $W$ is a finite set of states, $R\subseteq W^2$ a left-total transition relation, $\eta\colon W\rightarrow 2^{\mathrm{AP}}$ a labelling function, and $w_0\in W$ an initial state of $W$. A path $\sigma$ through $\kK$ is an infinite sequence $\sigma \in W^\omega$ such that $\sigma[0] \dfn w_0$ and $(\sigma[i], \sigma[i + 1]) \in R$ for every $i \geq 0$. The \emph{trace of $\sigma$} is defined as $t(\sigma) \dfn \eta(\sigma[0])\eta(\sigma[1])\dots \in (2^\mathrm{AP})^\omega$.  A Kripke structure $\kK$ then \emph{generates} the trace set  $\traces(\kK) \dfn \{t(\sigma) \mid \sigma \text{ is a path through $\kK$}\}$.
 
\begin{definition}

The \emph{model checking problem} of a logic $\LL$ is the following decision problem:
Given a formula $\varphi\in\LL$  and a Kripke structure $K$ over $\mathrm{AP}$, determine whether $\mathit{Traces}(K) \models \varphi$,
The \emph{(countable) satisfiability problem} of a logic $\LL$ is the following decision problem:
Given a formula $\varphi\in\LL$, determine whether $T\models \phi$ for some (countable)   $T\neq \emptyset$.
\end{definition}

 Below we will use the fact that the model checking and satisfiability problems of \LTL are $\PSPACE$-complete \cite{SC85}. Furthermore, we use the facts that  the satisfiability problem of propositional team logic $\mathrm{PL}(\sim)$ is $\mathrm{ATIME}\mbox{-}\mathrm{ALT(exp,poly)}$-complete \cite{HKVV18}, and that the complexity of modal team logic is complete for the class  $\mathrm{TOWER(poly)} \dfn \mathrm{TIME}(\exp_{n^{O(1)}}(1))$, where $\exp_{0}(1) \dfn 1$ and $\exp_{k+1}(1) \dfn 2^{\exp_{k}(1)}$ \cite{Luck18a,Luck18b}.

\begin{theorem}\label{thm:complexity}
\begin{enumerate}
\item\label{i}  The model checking and satisfiability problems of $\teamltl^l(\clor)$ are $\PSPACE$-complete.
\item\label{ii} The model checking and satisfiability problems of the left-flat fragment of  $\teamltl(\clor)$ are  $\PSPACE$-complete.
\item\label{iii}  The model checking  problem of the left-downward closed fragment of $\teamltl^l(\sim,\clor)$ is $\PSPACE$-hard and it is contained in $\mathrm{TOWER(poly)}$.
\item\label{iv}  The satisfiability problem of the left-downward closed fragment of $\teamltl^l(\sim,\clor)$ is $\mathrm{ATIME}\mbox{-}\mathrm{ALT(exp,poly)}$-hard and it is contained in  $\mathrm{TOWER(poly)}$.
\end{enumerate}
\end{theorem}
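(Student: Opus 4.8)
The plan is to derive all four claims from the two normal-form theorems (Theorems \ref{disjnf} and \ref{qfnf}) together with the known complexities of \LTL\ and of $\mathrm{PL}(\sim)$. The lower bounds are uniform and easy: classical \LTL\ embeds into every logic considered via singleton equivalence and flatness. Indeed, an \LTL-formula $\alpha$ is flat, so $\traces(K)\models^l\alpha$ iff every trace of $K$ satisfies $\alpha$ iff $K\models\alpha$, and $\alpha$ is team-satisfiable iff it is \LTL-satisfiable. This yields $\PSPACE$-hardness for the model-checking and satisfiability problems of items \ref{i}--\ref{iii}. For the satisfiability lower bound of item \ref{iv} I would instead embed propositional team logic $\mathrm{PL}(\sim)$, whose formulae are (vacuously) left-downward closed since they contain no temporal operators, and invoke its $\mathrm{ATIME}\mbox{-}\mathrm{ALT(exp,poly)}$-hardness from \cite{HKVV18}.

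For the $\PSPACE$ upper bounds of items \ref{i} and \ref{ii}, the key point is \emph{not} to materialise the $\clor$-disjunctive normal form of Theorem \ref{disjnf}, which is exponentially large, but to guess a single disjunct. By the final remark in the proof of Theorem \ref{disjnf}, every $\phi\in\teamltl^l(\clor)$ is equivalent to $\Clor_{f\in F}\phi^f$, where each selection function $f$ chooses, for each occurrence of $\clor$, one of its disjuncts, and $\phi^f$ is an \LTL-formula of size at most $\lvert\phi\rvert$. Since each $\phi^f$ is flat, $\traces(K)\models^l\phi$ iff $K\models\phi^f$ for some $f$, and $\phi$ is satisfiable iff some $\phi^f$ is \LTL-satisfiable. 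The algorithm therefore nondeterministically guesses $f$ (a polynomial number of bits), constructs the polynomial-size $\phi^f$, and runs \LTL\ model checking or satisfiability, which is in $\PSPACE$ \cite{SC85}; since $\mathrm{NPSPACE}=\PSPACE$ by Savitch's theorem, both problems lie in $\PSPACE$. Item \ref{ii} then reduces to item \ref{i}: by the second part of Theorem \ref{lflat} strict and lax semantics agree on left-flat formulae, so for model checking $\traces(K)\models\phi$ iff $\traces(K)\models^l\phi$ (as $\support(\traces(K))=\traces(K)$), and strict satisfiability coincides with lax satisfiability via the support; both cases are already covered by item \ref{i}.

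For the $\mathrm{TOWER(poly)}$ upper bounds of items \ref{iii} and \ref{iv}, I would first transform the input $\phi$ into the quasi-flat form $\Clor_{i\in I}(\alpha_i\wedge\bigwedge_{j\in J_i}\exists\beta_{i,j})$ of Theorem \ref{qfnf}. The only source of non-elementary blow-up is the $\sim$-case, whose distributive expansion is exponential, so nesting of $\sim$ of polynomial depth produces a quasi-flat formula of size bounded by a tower of exponentials of polynomial height; its construction is thus in $\mathrm{TOWER(poly)}$. Evaluation is then elementary in this size, using the two observations stated before Theorem \ref{qfnf}: $\traces(K)\models^l\phi^*$ holds iff some disjunct $i$ satisfies $K\models\alpha_i$ together with, for every $j\in J_i$, the existence of a path of $K$ fulfilling $\beta_{i,j}$ (the complement of a universal \LTL\ model-checking query); and $\phi^*$ is satisfiable iff some disjunct satisfies that $\alpha_i$ and each $\alpha_i\wedge\beta_{i,j}$ are \LTL-satisfiable, a nonempty witnessing team being assembled from one model per conjunct. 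Each such \LTL\ test is in $\PSPACE$ in the (tower-sized) subformulae, so ranging over all disjuncts keeps both problems within $\mathrm{TOWER(poly)}$.

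The main obstacle is the bookkeeping around the blow-ups rather than any single hard construction. For items \ref{i}--\ref{ii} one must keep the reduction polynomial by guessing a disjunct instead of writing out the normal form and by appealing to $\mathrm{NPSPACE}=\PSPACE$. For items \ref{iii}--\ref{iv} one must verify that, although the quasi-flat normal form is tower-sized, both its construction and the subsequent \LTL\ queries stay within $\mathrm{TOWER(poly)}$, and that the satisfiability of a single quasi-flat disjunct reduces faithfully to \LTL-satisfiability of $\alpha_i$ and of the conjunctions $\alpha_i\wedge\beta_{i,j}$.
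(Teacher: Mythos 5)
Your proposal is correct and follows essentially the same route as the paper: lower bounds via the \LTL{} embedding (and $\mathrm{PL}(\sim)$ for item \ref{iv}), $\PSPACE$ upper bounds by evaluating one selection-function disjunct $\phi^f$ of Theorem \ref{disjnf} at a time without materialising the normal form (the paper enumerates the $f\in F$ deterministically in polynomial space where you guess and invoke Savitch), item \ref{ii} via Theorem \ref{lflat}, and $\mathrm{TOWER(poly)}$ upper bounds by computing the quasi-flat form of Theorem \ref{qfnf} and reducing to \LTL{} queries on the $\alpha_i$ and $\alpha_i\wedge\beta_{i,j}$. Your explicit handling of the satisfiability of $\alpha_i$ itself (covering empty $J_i$) is a slightly more careful statement of the paper's satisfiability criterion, but the argument is the same.
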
 

\begin{proof}
Let us first consider the proofs of claims \ref{i} and \ref{ii}.  Note that \PSPACE-hardness holds already for \LTL-formulae, hence it suffices to show containment in \PSPACE. Furthermore, note that  \ref{ii} follows immediately from \ref{i} and Theorem \ref{lflat}.  Assume a formula $\phi\in \teamltl^l(\clor)$  and a Kripke structure $K$ is given as input. By  Theorem \ref{disjnf}, $\phi$ is logically equivalent with a formula of the form  $ \Clor_{f\in F} \phi^f$, where  $f$ varies over selection functions  selecting, separately for each occurrence, either the left disjunct $\psi$ or the
right disjunct $\theta$ of each subformula of the form $\psi\ovee \theta$ of $\phi$.
Now, without constructing the full formula $\Clor_{f\in F} \phi^f$, using polynomial space with respect to the size of $\phi$ it is possible to check whether $\traces(\kK) \models \phi_f$ for some $f\in F$. Hence the upper bound follows from the fact that \LTL model checking is in \PSPACE. The upper bound for satisfiability follows analogously.

Let us then consider the proof of claim  \eqref{iv}. The proof of claim  \eqref{iii} is analogous. 
For the lower bound it suffices to note that  propositional team logic $\mathrm{PL}(\sim)$ is a fragment of the left-downward closed fragment of $\teamltl^l(\sim,\clor)$ and hence its satisfiability problem can be trivially reduced  to the  satisfiability problem of the  left-downward closed fragment. Therefore  $\mathrm{ATIME}\mbox{-}\mathrm{ALT(exp,poly)}$-hardness follows by the result of  \cite{HKVV18}.

For the upper bound we first  transform an input formula $\phi $ into an equivalent quasi-flat formula of the form 
 \begin{equation}\label{qfnfc}
  \Clor_{i\in I} (\alpha_i\wedge \bigwedge_{j\in J_i} \exists \beta_{i,j}).  
  \end{equation}
Analogously to  \cite{Luck18a,Luck18b}, this formula can be computed in time  $\mathrm{TIME}(\exp_{O(|\phi |)}(1))$. It is now easy to see that the formula  \eqref{qfnfc} is satisfiable iff there exists  $i\in I$, such that 
$\mathrm{SAT}(\alpha_i \wedge \beta_{i,j})=1$  for all $j\in J_i$. 
Since \LTL-satisfiability checking is contained in $\PSPACE\subseteq \mathrm{TIME}(2^{n^{O(1)}})$, the overall complexity of the above procedure is in 
$\mathrm{TIME}(\exp_{(|\phi |^{O(1)})}(1))$.
\end{proof}

\section{Connections to Other Forms of Asynchronicity}

In \cite{GMOV22} the authors introduced a novel team-based logic that can deal with different modes of asynchronous hyperproperties by using so-called \emph{time evaluation functions} (\tefs).
Time evaluation functions facilitate fine-grained asynchronous interactions between traces. Intuitively given a trace $t\in T$ and a value of the global clock $i\in \N$, a \tef $\tau$ outputs the value $\tau(i,t)$ of the local clock of trace $t$ at global time $i$.
If $T$ is a multiset of traces, a \emph{time evaluation function} for $T$ is a function $\tau\colon \N\times T \rightarrow \N$ that satisfies the following two conditions. We write $\tau(i)$ to denote the function $T\rightarrow\N$ defined by $t\mapsto \tau(i,t)$.
\begin{itemize}
\item \emph{stepwiseness} -- $\forall i\in \N \, \forall t\in T: \tau(i+1,t) \in \{ \tau(i,t), \tau(i,t)+1\}$,
\item \emph{strict monotonicity} -- $\forall i\in \N:\tau(i) \neq \tau(i+1)$.
\end{itemize}
A \tef is \emph{initial}, if $\tau(0,t)=0$ for each $t\in T$.

It was shown in \cite{GMOV22} that when \tefs are assumed to be synchronous, we obtain exactly synchronous $\teamltl$ as defined in \cite{kmvz18}.
In this section, we take a closer look on the connections between asynchronous $\teamltl$ and team-based logics with \tefs. We identify a logic with \tefs that corresponds almost exactly to asynchronous $\teamltl$ and to the left-flat fragment of asynchronous $\teamltl(\clor)$. This connection establishes the first non-trivial decidability result for logics with \tefs without putting heavy restrictions on \tefs.

We give the syntax $\teamctl$ with an additional synchronous next operator $\X$ that was shown in \cite{GMOV22} to be expressible in the logic using simple gadgets.
\[
\varphi \ddfn p \mid \neg p \mid (\phi\land\psi) \mid (\phi\lor\psi) \mid \X \phi \mid \X_\exists \phi \mid \X_\forall \phi \mid \G_\exists \phi \mid \G_\forall \phi \mid [\phi\U_\exists\psi] \mid [\phi\U_\forall\psi]
\]
We follow with the semantics. In \cite{GMOV22} $\teamctl^*$-formulae were evaluated with respect to pairs $(T,\tau)$, but since we conciser only $\teamctl$ here, we choose to internalise $\tau$ into $T$. The cases for propositional atoms, Boolean connectives, and $\X$ are as for asynchronous $\teamltl$ (see Definition \ref{def:strict}). Note that here the functions $\tau(i)$ take the role of the functions $f$ of Definition \ref{def:strict}.
	\begin{align*}
	&T \models\X_\exists \varphi  &&\Leftrightarrow && \text{there is an initial \tef $\tau$ s.t. $T[\tau(1),\infty] \models \varphi$}\\
	&T \models\X_\forall \varphi  &&\Leftrightarrow && \text{for all initial \tefs $\tau$, we have $T[\tau(1),\infty] \models \varphi$}\\
	&T \models\G_\exists \varphi  &&\Leftrightarrow && \text{there is an initial \tef $\tau$ s.t. $T[\tau(k),\infty] \models \varphi$, for all $k\in\N$}\\
	&T \models\G_\forall \varphi  &&\Leftrightarrow && \text{for all initial \tefs $\tau$, we have $T[\tau(k),\infty] \models \varphi$, for all $k\in\N$}\\
	&T\models[\phi\U_\exists\psi] &&\Leftrightarrow && \text{there is an initial \tef $\tau$ and $k\in\N$  s.t. } T[\tau(k),\infty]\models\psi \text{ and }\\
	&&&&&\quad\quad \forall m: 0 \leq m <k \Rightarrow T[\tau(m),\infty]\models\phi\\
		&T\models[\phi\U_\forall \psi] &&\Leftrightarrow && \text{for all initial \tefs $\tau$, $\exists k\in\N$  s.t. } T[\tau(k),\infty]\models\psi \text{ and }\\
	&&&&&\quad\quad \forall m: 0 \leq m <k \Rightarrow T[\tau(m),\infty]\models\phi
\end{align*}


We identify a collection of the above temporal operators that match as closely as possible with the operators of asynchronous $\teamltl$. For a collection of temporal operators $\mathcal{C}$, we write $\teamctl(\mathcal{C})$ to denote the logic built from propositional atoms by using $\land$, $\lor$, and the operators in $\mathcal{C}$.

In order to deal with the asynchronous until operator, we need to do two concessions. Firstly, we need to restrict ourselves to the left-flat fragment  (c.f. Definition \ref{def:leftflat}). 
 Secondly, instead of until, we use the strong release operator defined by $[\psi \M \varphi] \dfn [\varphi\U\varphi\land\psi]$. The  $\F$ and $\G$ modalities can be used without any restrictions.
Finally, we say that two formulas $\varphi$ and $\psi$ are \emph{fin-equivalent}, if $T\models \varphi \Leftrightarrow T\models \psi $ holds for all finite multiteams T.
With these restriction, we can prove the following theorem.
\begin{theorem}\label{thm:ctl}
For every left-flat-$\teamctl(\X, \G_\forall, \M_\exists, \clor)$-formula there exists an fin-equivalent 
$\teamltl^l(\clor)$ formula using $\M$ instead of $\U$, and vice versa.
\end{theorem}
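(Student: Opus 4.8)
The plan is to use the translation that merely renames operators---$\G_\forall\mapsto\G$, $\M_\exists\mapsto\M$, $\X\mapsto\X$, and the Boolean connectives $\land,\lor,\clor$ and the atoms to themselves---and to prove by induction on the structure of the formula that this renaming $\varphi\mapsto\varphi^*$ preserves fin-equivalence, that is, $T\models\varphi$ iff $T\models^l\varphi^*$ for every finite multiteam $T$. Since the renaming preserves left-flatness, the image is again a left-flat $\teamltl^l(\clor)$ formula, and the inverse renaming handles the ``vice versa'' direction verbatim. The key tool is a realisability lemma for finite teams: every function $g\colon T\to\N$ arises as $\tau(k)$ for some initial \tef $\tau$ and some $k\in\N$. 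A witness is obtained by putting $\tau(i,t)\dfn\min(i,g(t))$, reaching $g$ at $k\dfn\max_{t\in T}g(t)$, and then advancing all traces in lock-step for $i>k$ to preserve strict monotonicity; stepwiseness is immediate. Finiteness of $T$ is exactly what makes $g$ bounded, hence reachable at a finite global time, and it is the sole reason the statement is limited to fin-equivalence.

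I would first dispatch the routine cases. For atoms, for $\land,\lor,\clor$, and for the synchronous next $\X$, the two semantics coincide verbatim, so the induction step is immediate. For $\G_\forall\psi$, left-flatness forces $\psi$ to be an \LTL-formula, hence flat. Using the realisability lemma, $T\models\G_\forall\psi$ is equivalent to $T[g,\infty]\models\psi$ for \emph{all} $g\colon T\to\N$, and flatness reduces this to $t[i,\infty]\models\psi$ for every $t\in T$ and $i\in\N$. On the lax side, $\G$ ranges over set-valued $f\colon T\to\mathcal P(\N)^+$, but flatness of $\psi$ again collapses $T[f,\infty]\models^l\psi$ to the same pointwise statement; hence $T\models\G_\forall\psi$ iff $T\models^l\G\psi$.

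The main obstacle is the release case, where I compare $[\psi\M_\exists\varphi]=[\varphi\U_\exists(\varphi\land\psi)]$ with the lax $[\varphi\U(\varphi\land\psi)]$; here $\varphi$ is \LTL\ (hence flat) by left-flatness, while $\psi$ is handled by the induction hypothesis. The gap to bridge is that a \tef supplies a \emph{single-valued} function $\tau(k)$, whereas the lax clause quantifies a \emph{set-valued} $f$. From $\teamctl$ to $\teamltl^l$: given a witnessing \tef $\tau$ and index $k$, set $f(t)\dfn\{\tau(k,t)\}$; then $T[f,\infty]=T[\tau(k),\infty]$ satisfies $\varphi\land\psi$ by the induction hypothesis, and since a \tef passes through every intermediate local time, flatness of $\varphi$ gives $t[s,\infty]\models\varphi$ for all $s<\tau(k,t)$, which is exactly the lax intermediate condition $T'[f',\infty]\models^l\varphi$ for all $f'<f$. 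Conversely, from a lax witness $f$ I use downward closure of $\teamltl^l(\clor)$ (Theorem~\ref{lflat}) to pass to a single-valued selector $g(t)\in f(t)$ with $T[g,\infty]\models^l\varphi\land\psi$; this simultaneously makes the team finite so the induction hypothesis applies. Unfolding the lax order $f'<f$ on singleton and two-element $f'$ shows the intermediate condition amounts to $t[s,\infty]\models\varphi$ for all $t\in T'$ and $s<\max f(t)$, which (since $g(t)\le\max f(t)$) together with the flat $\varphi$ at the endpoint yields $\varphi$ at every $s\le g(t)$. The realisability lemma then produces an initial \tef with $\tau(k)=g$ and $\tau(m,t)\le g(t)$ for all $m<k$, so $T[\tau(k),\infty]\models\varphi\land\psi$ and $T[\tau(m),\infty]\models\varphi$ for every $m<k$, witnessing the $\teamctl$ side.

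Two points deserve emphasis. The reduction from set-valued $f$ to single-valued $g$ is the crux, and it rests squarely on downward closure (Theorem~\ref{lflat}), which is available precisely because the logic is $\teamltl^l(\clor)$ without Boolean negation; it is also what keeps every auxiliary team finite, so that the inductive fin-equivalence hypothesis can be invoked on the subteams $T[\tau(k),\infty]$, $T[g,\infty]$, and $T[f,\infty]$. The ``vice versa'' direction then needs no new work: the inverse renaming is inductively inverse to $\varphi\mapsto\varphi^*$, and each equivalence established above reads equally well in both directions.
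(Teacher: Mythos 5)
Your proposal is correct and follows essentially the same route as the paper's proof: the same operator-renaming translation, induction with $\G_\forall$ and strong release as the only non-trivial cases, and the same ingredients (flatness of the left argument, downward closure of $\teamltl^l(\clor)$, and finiteness of $T$ to realise functions as stages of initial \tefs). Your explicit realisability lemma and the singleton/two-element unfolding of the lax order $f'<f$ simply spell out what the paper compresses into ``follows from flatness of $\varphi$, finiteness of $T$, and the induction hypothesis.''
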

\begin{proof}
It is easy to check that all the logics in question are downward closed (cf. the proof of Theorem \ref{thmdc} in the Appendix). We prove the equivalence with the left-flat-$\teamltl^l(\clor)$, which by Theorems \ref{lflat} and  \ref{disjnf} is equivalent with $\teamltl(\clor)$.

The translations simply swap $\M$ with $\M_\exists$ and $\G$ with $\G_\forall$.
Correctness of the translations can be proven by induction on the structure of formulae. The only non-trivial cases are the cases for strong release and globally.

The case for globally follow from the following chain of equivalences:
\begin{align*}
&T \models^l \G\varphi &&\Leftrightarrow&& \forall  f\colon T\rightarrow \mathbb{N}
		  \quad T[f,\infty] \models\varphi\\
		  	&  &&\Leftrightarrow&& \text{for all initial \tefs $\tau$, we have $T[\tau(k),\infty] \models \varphi$, for all $k\in\N$}\\
			&  &&\Leftrightarrow&& T \models\G_\forall \varphi.
\end{align*}
It is straightforward to check the second equivalence holds (recall that $\varphi$ is flat). The first and the last equivalence are simply the semantics of the respective operators. 
 
Assume $[\varphi\U\varphi\land\psi]$ is such that $\varphi$ is flat.
Now
\begin{align*}
&T \models^l [\varphi\U\varphi\land\psi] 
			&&\Leftrightarrow&&  \exists f\colon T\rightarrow \mathcal{P}(\mathbb{N})  \text{ such that } T[f,\infty] \models \varphi\land\psi\mbox{ and } \\
			&&&&& \forall f'\colon T'\rightarrow \mathcal{P}(\mathbb{N}) \text{ s.t. $f' < f$, we have } T'[f',\infty] \models\varphi\\
			&&&&& \text{where }T' \dfn \{t \in T \mid \max(f(t))\neq 0  \}\\
			&&&\Leftrightarrow&& \text{there is an initial \tef $\tau$ and $k\in\N$  s.t. } T[\tau(k),\infty]\models \varphi\land\psi \text{ and }\\
	&&&&&\quad\quad \forall m: 0 \leq m <k \Rightarrow T[\tau(m),\infty]\models\phi\\
	&&&\Leftrightarrow&& T \models [\varphi\U_\exists\varphi\land\psi].
\end{align*}
The first equivalence is the semantics of until and the second follows from flatness of $\varphi$, finiteness of $T$, and the induction hypothesis.
The last equivalence is the semantics of $\U_\exists$.
\end{proof}

\begin{corollary}
For every $\teamctl(\X, \G_\forall, \M_\exists)$-formula there exists a fin-equivalent $\teamltl$ formula using $\M$ instead of $\U$, and vice versa. (Note that there is no difference between $\teamltl$ and $\teamltl^l$.)
\end{corollary}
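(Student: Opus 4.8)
The plan is to obtain this corollary as the $\clor$-free specialisation of Theorem \ref{thm:ctl}. First I would observe that $\teamctl(\X, \G_\forall, \M_\exists)$ is precisely the fragment of left-flat-$\teamctl(\X, \G_\forall, \M_\exists, \clor)$ in which the Boolean disjunction $\clor$ does not occur, and that $\teamltl$ (read with $\M$ in place of $\U$) is likewise the $\clor$-free fragment of $\teamltl^l(\clor)$. Hence, if the two side-conditions carried by Theorem \ref{thm:ctl}---left-flatness on the $\teamctl$ side, and the need to work inside lax semantics on the $\teamltl$ side---both become vacuous once $\clor$ is dropped, then the very same translations (swapping $\M\leftrightarrow\M_\exists$ and $\G\leftrightarrow\G_\forall$) will transfer the fin-equivalence verbatim.

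The key step is therefore to show that both $\clor$-free logics are flat. For $\teamltl$ this is immediate: $\teamltl$ is flat by Theorem \ref{thmdc}, and the strong release $[\psi\M\varphi]=[\varphi\U\varphi\land\psi]$ preserves flatness because $\U$, $\land$, and $\lor$ do. Flatness then collapses the distinction between strict and lax semantics (this is the content of the parenthetical remark, and also follows from Theorem \ref{lflat}), so $\teamltl=\teamltl^l$ on this fragment. For $\teamctl(\X, \G_\forall, \M_\exists)$ I would prove flatness by induction on formulae; the cases of atoms, $\land$, $\lor$, and $\X$ are routine, so the substance lies in $\G_\forall$ and $\M_\exists$. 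Since the relevant subformulae are flat by the induction hypothesis, the universally quantified $\G_\forall$ behaves exactly like asynchronous $\G$---this is the chain of equivalences already displayed in the proof of Theorem \ref{thm:ctl}---and therefore preserves flatness.

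The main obstacle is the existential strong release $\M_\exists$, where flatness requires amalgamating a per-trace witnessing time-evaluation function into a single \tef for the whole team; this amalgamation is exactly the reasoning carried out in the strong-release case of the proof of Theorem \ref{thm:ctl}, and it succeeds precisely because the left argument $\varphi$ of $[\varphi\U_\exists\varphi\land\psi]$ is flat. Thus the induction simultaneously establishes flatness and shows that every subformula occurring on the left of a $\U$ or $\G$ is flat, hence equivalent by singleton equivalence to an \LTL-formula, so that the left-flat hypothesis of Theorem \ref{thm:ctl} is met automatically. With both logics flat and the left-flat requirement discharged, a direct appeal to Theorem \ref{thm:ctl} yields the fin-equivalence in both directions and completes the proof.
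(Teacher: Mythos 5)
Your proposal is correct and is essentially the argument the paper intends: the corollary is the $\clor$-free specialisation of Theorem \ref{thm:ctl}, where the only substantive point is that once $\clor$ is gone every formula is flat, and flatness is precisely what the left-flat hypothesis was supplying in that theorem's proof (the parenthetical remark about $\teamltl=\teamltl^l$ being Theorem \ref{thmdc} together with Theorem \ref{lflat}).

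Two small imprecisions are worth fixing. First, your unqualified claim that the $\clor$-free fragment $\teamctl(\X,\G_\forall,\M_\exists)$ is flat is false over arbitrary teams: the paper's example immediately after this corollary (the traces $\{p\}^k\{q\}^\omega$, $k\in\N$) exhibits a team all of whose singletons satisfy an existential-\tef formula while the infinite team does not, since a stepwise \tef cannot advance a trace further than the global clock. Your amalgamation step for $\M_\exists$ (merging per-trace witnesses into one \tef) takes a maximum of the witnessing times and so genuinely needs finiteness of the team; this costs nothing for fin-equivalence, but flatness must be stated for finite teams only. Second, your opening assertion that $\teamctl(\X,\G_\forall,\M_\exists)$ \emph{is precisely} the $\clor$-free part of the left-flat fragment is not literally true syntactically (e.g.\ $\G_\forall\G_\forall p$ is $\clor$-free, yet its left argument is not an \LTL-formula); however, the substitution you perform in your third paragraph---replacing flat left arguments by \LTL-formulae equivalent to them over singletons---repairs this, and is exactly what is needed to invoke Theorem \ref{thm:ctl} as a black box.
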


By combining Theorem \ref{thm:ctl} to Theorems \ref{lflat} and \ref{thm:complexity}, we obtain the following:
\begin{corollary}
The model checking problem of left-flat-$\teamctl(\X, \G_\forall, \M_\exists, \clor)$ restricted to finite teams is $\PSPACE$-complete.
\end{corollary}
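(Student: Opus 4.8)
The plan is to establish the two complexity bounds separately, in each case passing through the lax asynchronous logic $\teamltl^l(\clor)$ by means of the translation of Theorem~\ref{thm:ctl}. The restriction to finite teams is precisely what legitimises this passage: Theorem~\ref{thm:ctl} only yields \emph{fin-equivalence}, so the translation preserves truth exactly on finite teams, and it is on such inputs that the model checking problem here is posed.

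For the upper bound, suppose we are given a left-flat-$\teamctl(\X, \G_\forall, \M_\exists, \clor)$-formula $\varphi$ and a finite team $T$. First I would apply the translation of Theorem~\ref{thm:ctl}, which merely replaces $\G_\forall$ by $\G$ and $\M_\exists$ by $\M$ and hence is computable in linear time; let $\psi$ be the result. Fin-equivalence together with the finiteness of $T$ gives $T \models \varphi$ iff $T \models^l \psi$. I would then unfold each strong release via $[\chi \M \theta] \dfn [\theta \U (\theta \wedge \chi)]$ to reach a genuine $\teamltl^l(\clor)$-formula $\psi'$; since left-flatness forces the left argument of the resulting $\U$ to be an $\LTL$-formula, $\psi'$ stays left-flat and grows only polynomially. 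Now the first claim of Theorem~\ref{thm:complexity}—whose justification rests on Theorem~\ref{lflat} for bridging the lax semantics with $\LTL$—places the test $T \models^l \psi'$ in $\PSPACE$. Since restricting Theorem~\ref{thm:complexity} from arbitrary Kripke-generated trace sets to finite teams can only make the problem easier, membership in $\PSPACE$ follows.

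For the lower bound, I would reduce ordinary $\LTL$ model checking, which is $\PSPACE$-hard \cite{SC85}, to the problem at hand. Every $\LTL$-formula $\alpha$ is flat, and therefore vacuously left-flat, and can be expressed in $\teamctl(\X, \G_\forall, \M_\exists, \clor)$ by reading $\G$ as $\G_\forall$, realising $\F$ and release through $\M_\exists$, and keeping $\X$ unchanged. By flatness and singleton equivalence the team semantics of such a formula on the given input coincides with the classical $\LTL$ truth values over its traces, so the $\LTL$ instance is faithfully encoded, yielding $\PSPACE$-hardness.

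The main obstacle is not a deep new idea but the bookkeeping at the interface. On the upper-bound side one must check that the translation of Theorem~\ref{thm:ctl} followed by the elimination of $\M$ keeps the formula inside the left-flat fragment of $\teamltl^l(\clor)$, so that Theorem~\ref{thm:complexity} is applicable, and that both steps incur only a polynomial blow-up. The genuinely delicate point, however, is reconciling the finite-team hypothesis—forced upon us by the mere fin-equivalence in Theorem~\ref{thm:ctl}—with both the soundness of the upper-bound reduction and the need to exhibit the hardness on instances that are themselves finite teams, so that the stated restriction is respected on both sides.
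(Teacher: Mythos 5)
Your overall route is the paper's own: there the corollary is obtained precisely by combining the fin-equivalent translation of Theorem~\ref{thm:ctl} with Theorems~\ref{lflat} and~\ref{thm:complexity}, and your upper-bound argument fleshes this combination out correctly. In particular, your observations that unfolding $[\chi \M \theta]$ as $[\theta \U (\theta\land\chi)]$ keeps the formula left-flat (the right argument of $\M_\exists$ becomes the left argument of the resulting $\U$, and left-flatness makes it an \LTL-formula) and that this incurs only polynomial blow-up are exactly what is needed to invoke the first claim of Theorem~\ref{thm:complexity}, so membership in \PSPACE goes through.

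The genuine gap is in the lower bound. You reduce from ordinary \LTL model checking, but an instance of that problem is a Kripke structure $K$ together with a formula, and $\traces(K)$ is in general an \emph{infinite} team; hence the instances your reduction produces are not instances of the problem at hand, which is restricted to finite teams. You flag this yourself as ``the genuinely delicate point'' but never resolve it, and it is not mere bookkeeping: if $\traces(K)$ is finite, then every trace of $K$ is a lasso with prefix and period bounded by the number of states (pick a repeated state among the first $|W|{+}1$ positions of a generating path; pumping the enclosed segment yields traces that must coincide by finiteness, forcing the suffix to be periodic with short period), so \LTL model checking over finite-trace structures drops to $\mathrm{coNP}$ via polynomial-time path checking, and the whole finite-team problem lies in $\Sigma_2^p$ via the normal form of Theorem~\ref{disjnf} (guess a selection of the $\clor$-disjuncts, then a $\mathrm{coNP}$ check that all short lassos realizable in $K$ satisfy the chosen flat disjunct); the situation is only easier if the finite team is given as an explicit list of lassos. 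Consequently, \PSPACE-hardness cannot be recovered simply by restricting the classical \LTL reduction to the promise of finite teams, and your hardness step, as written, does not go through. To be fair, the paper's one-sentence derivation glosses over exactly the same point, but a proof of the stated completeness needs a genuinely different justification of hardness (or a different reading of ``restricted to finite teams'') than the one you give.
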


We showed that over finite sets of traces the left-flat fragment of $\teamctl(\X, \G_\forall, \M_\exists, \clor)$ coincides with the left-flat fragment of $\teamltl(\clor)$ using $\M$ instead of $\U$. The following example shows that the same does not hold over arbitrary sets of traces.

\begin{example}
Let $T$ consist of the traces $\{p\}^k\{q\}^\omega$, $k\in\N$. Clearly $T\models^l p\U q$ in asynchronous $\teamltl$, but $T\not\models p\U_\exists q$ in $\teamctl$. 
\end{example}

\section{Conclusion}

We introduced a novel  set-based semantics for asynchronous \teamltl. We showed several results on the expressive power and complexity of the extensions of $\teamltl^{l}$ by the Boolean disjunction $\teamltl^{l}(\ovee)$ and by the Boolean negation $\teamltl^{l}(\cneg)$. In particular, our results show that the complexity properties  of the former logic are comparable to that of \LTL and that the left-downward closed fragment of the latter  has also decidable model-checking and satisfiability problems. See Table \ref{table:exp} on page \pageref{table:exp} for an overview of our expressivity results and Table \ref{table:comp} for our complexity results. We obtained these results on  $\teamltl^{l}(\ovee)$ and $\teamltl^{l}(\cneg)$ via normal forms that also allowed us to relate the  expressive power of these logics to the corresponding logics in the strict semantics.
Our results show that, while the synchronous TeamLTL can be viewed as a fragment of second-order logic, the asynchronous $\teamltl(\clor)$ under the lax semantics is a sublogic of HyperLTL (see \cite{DBLP:conf/post/ClarksonFKMRS14} for a definition). Furthermore, our decidability results show, e.g, that it will probably be possible to devise a complete proof system for the logic.
We conclude with open questions:
\begin{itemize}
\item Does Theorem \ref{qfnf} extend to all formulae of  $\teamltl^{l}(\cneg)$? Note that any quasi-flat--$\teamltl(\sim)$-formula can be rewritten in \hyltl .
\item Can the result  (iii)  of Theorem \ref{thm:complexity} be accompanied by an matching lower bound (i.e.,  $\mathrm{TOWER(poly)}$-hardness result)?
\item Can a syntactic characterisation (similar to Corollary \ref{cor:flatness}) be obtained for the downward closed fragment of $\teamltl^{l}(\cneg)$?
We believe that $\teamltl^{l}(\clor)$ is a promising candidate, as its extensions with infinite conjunctions and disjunctions suffices for all downward closed properties of teams.
\item What is the complexity of model checking for $\teamltl(\clor)$ under the strict semantics?
\end{itemize}

\bibliographystyle{plain}
\bibliography{biblio}

\appendix

\section{Appendix}\label{a:one}
\setbased*
\begin{proof}  
  The proof is by induction on the structure of $\varphi$.
  
  Suppose $\varphi$ is a literal.
  By definition, $T \models^l \varphi$ if only if $t \models \varphi$ for all $t \in T$. Now, seeing as $\varphi$ is a literal of $\ap$, the latter is equivalent with $t \models \varphi$ for all $t \in T_{\upharpoonright\ap}$. This, by definition, is equivalent with $T_{\upharpoonright\ap} \models^l \varphi$.

  Suppose $\varphi = \psi_1 \wedge \psi_2$. By definition, $T\models^l \varphi$ if and only if $T\models^l \psi_1$ and $T\models^l \psi_2$. By the induction hypothesis, the latter holds if and only if $T_{\upharpoonright\ap} \models^l \psi_1$ and $T_{\upharpoonright\ap} \models^l \psi_2$, which by definition is equivalent with $T_{\upharpoonright\ap} \models^l \varphi$.
  
  Suppose $\varphi = \psi' \vee \psi''$. By definition, $T \models^l \varphi$ if and only if there are subteams $T' \cup T'' = T$ such that $T' \models^l \psi'$ and $T'' \models^l \psi''$. By the induction hypothesis, the latter two claims are equivalent with $T'_{\upharpoonright\ap} \models^l \psi'$ and $T''_{\upharpoonright\ap} \models^l \psi''$. Now $T'_{\upharpoonright\ap} \cup T''_{\upharpoonright\ap} = T_{\upharpoonright\ap}$, hence $T_{\upharpoonright\ap} \models^l \varphi$. For the converse, assume that $T_{\upharpoonright\ap} \models^l \varphi$ is witnessed by subteams $T'_{\upharpoonright\ap}$ and $T''_{\upharpoonright\ap}$. Now it is easy to check that $T'=\{t\in T \mid t_{\upharpoonright\ap} \in T'_{\upharpoonright\ap}\}$
and   $T''=\{t\in T \mid t_{\upharpoonright\ap} \in T''_{\upharpoonright\ap}\}$ witness $T \models^l \varphi$.

  Suppose $\varphi = \X\psi$. By definition, $T \models^l \varphi$ if and only if $T[1,\infty] \models^l \psi$, which by the induction hypothesis is equivalent with $T[1,\infty]_{\upharpoonright\ap} \models^l \psi$. Note that $T[1,\infty]_{\upharpoonright\ap} = T_{\upharpoonright\ap}[1,\infty]$, whereby $T \models^l \X\psi$ holds if and only if $T_{\upharpoonright\ap} \models^l \X\psi$ holds.
  
  Suppose $\varphi = \G\psi$. By definition $T \models^l \varphi$ is equivalent with that  $T[f,\infty] \models^l \psi$ for all functions $f\colon T \to \mathcal{P}(\N^+)$. By the induction hypothesis the latter is equivalent with $T[f,\infty]_{\upharpoonright\ap} \models^l \psi$ for functions $f$ as before. Now for any $f'\colon T_{\upharpoonright\ap} \to \mathcal{P}(\N^+)$ there is some $f$, such that $T[f,\infty]_{\upharpoonright\ap} = T_{\upharpoonright\ap}[f',\infty]$, since we can pick the function $f(s) = f'(t)$ for all $s\in T$ such that $s_{\upharpoonright\ap} = t$. Similarly, for each $f$ we obtain a corresponding $f'$ by taking its restriction to $\ap$. Hence $T[f,\infty] \models^l \psi$ holds for all $f$ if and only if $T_{\upharpoonright\ap}[f',\infty] \models^l \psi$ holds for all $f'$, and therefore $T \models^l \G\psi$ is equivalent with $T_{\upharpoonright\ap} \models^l \G\psi$.
  
  Suppose $\varphi = \psi_1 \U \psi_2$. Assume $T \models^l \varphi$. By definition there is a function $f_1\colon T \to \mathcal{P}(\N^+)$ such that $T[f_1,\infty] \models^l \psi_1$ and for all $f_2 < f_1$ it holds that $T^0[f_2,\infty] \models^l \psi_2$, where $T^0 \dfn \{t \in T\mid \max(f(t)) \neq 0\}$. By the induction hypothesis then $T[f_1,\infty]_{\upharpoonright\ap} \models^l \psi_1$ and $T^0[f_2,\infty]_{\upharpoonright\ap} \models^l \psi_2$ for $f_1$ and $f_2$ as previously. We define the function $f_1'\colon T_{\upharpoonright\ap} \to \mathcal{P}(\N^+)$ by setting $f_1'(s) \dfn \bigcup_i f_1(t^i)$, where $t^i\in T$ are such that $t^i_{\upharpoonright\ap} = s$. Now $T_{\upharpoonright\ap}[f_1',\infty] = T[f_1,\infty]_{\upharpoonright\ap}$. Furthermore, by a similarly defined $f_2'$, we get that $T^0_{\upharpoonright\ap}[f_2',\infty] = T^0[f_2,\infty]_{\upharpoonright\ap}$. Thus $T[f_1,\infty]_{\upharpoonright\ap} \models^l \psi_1$ if and only if $T_{\upharpoonright\ap}[f_1',\infty] \models^l \psi_1$ and for all $f_2<f_1$ it holds that $T^0[f_2,\infty]_{\upharpoonright\ap} \models^l \psi_2$ if and only if $T^0_{\upharpoonright\ap}[f_2,\infty] \models^l \psi_2$. Therefore $T_{\upharpoonright\ap} \models^l \varphi$. The converse follows analogously.
  
  Suppose $\varphi = \sim \psi$. By definition $T \models^l \varphi$ if and only if $T \not\models^l \psi$, which in turn is equivalent with $T_{\upharpoonright\ap} \not\models^l \psi$ by the induction hypothesis. This, again, is equivalent with $T_{\upharpoonright\ap} \models^l \varphi$, due to the definition.
\end{proof}

\basicprop*
 \begin{proof}
 The proofs proceed by induction over the structure of the formulae. Note that while downward closure follows from flatness, we need that the induction steps work with the weaker assumption of downward closure for  the result to generalise to non-flat extensions of the logic.
 
\textbf{Downward closure:} Let $\varphi\in\teamltl$ be a formula and $T,S$ teams such that $S \subseteq T$ and $T\lmodels \varphi$. We need to show that $S\lmodels \varphi$ as well.

For atomic $\varphi$ the claim is immediately true: $T\models^l \varphi$ if and only if $t \models \varphi$ for all $t \in T$, which also holds for all $t \in S$, and thus $S \models^l \varphi$.

For conjunction the claim follows immediately from the induction hypothesis. Let's consider the case of disjunction. Suppose $T \models^{l} \varphi \vee \psi$. By definition then there are $T_1, T_2 \subseteq T$ s.t. $T_1 \models^{l} \varphi$ and $T_2 \models^{l} \psi$. By the induction hypothesis $S\cap T_1 \models^{l} \varphi$ and $S\cap T_2 \models^{l} \psi$, since $S\cap T_1 \subseteq T_1$ and $S\cap T_2\subseteq T_2$. Furthermore $(S\cap T_1)\cup (S\cap T_2) = S$, and therefore $S \models^l \varphi\vee \psi$.
     
The case for $\X$ is straightforward. Suppose $T \models^l \X \varphi$. By definition $T[1,\infty] \models^l \varphi$. Since $S[1,\infty]\subseteq T[1,\infty]$ follows from $S\subseteq T$, we obtain $S[1,\infty] \models^l \varphi$ by the induction hypothesis. Thus $S\models \X \varphi$.

Next we suppose $T \models^l \G\varphi$. By definition then for all $f\colon T\rightarrow \mathcal{P}(\mathbb{N})^+$ it holds that $T[f,\infty] \models^l \varphi$. Now,  for all  $f\colon S\rightarrow \mathcal{P}(\mathbb{N})^+$, $S[f,\infty] \subseteq T[f',\infty]$, where $f'$ is any extension of $f$ to $T$. Hence by the induction hypothesis $S[f,\infty] \models^l \varphi$ for all  $f$ and $S \models^l \G\varphi$.

Suppose then that $T \models^l \varphi\U\psi$. For any function $h\colon T\rightarrow \mathcal{P}(\mathbb{N})^+$, let $h_S$ denote the reduct of $h$ to the domain $S$. From $S\subseteq T$, we get 
\begin{equation}\label{eq:U}
S[h_S,\infty]\subseteq T[h,\infty].
\end{equation}
By definition, there is a function $f\colon T\rightarrow \mathcal{P}(\mathbb{N})^+$ such that $T[f,\infty] \models^l \psi$. Moreover, for all $f'\colon T_0\rightarrow \mathcal{P}(\mathbb{N})^+$ such that $f' < f$, we have $T_0[f',\infty] \models^l \varphi$, where $T_0 \dfn \{t \in T\mid \max(f(t)) \neq 0\}$. By the induction hypothesis and \eqref{eq:U}, we have that $S[f_S,\infty] \models^l \psi$. Moreover, for all $g\colon S_0\rightarrow \mathcal{P}(\mathbb{N})^+$, where $S_0 \dfn \{t \in S\mid \max(f_S(t)) \neq 0\}$, such that $g < f_S$,
$S[g,\infty] \models^l \varphi$ follows by the induction hypothesis and the fact that every $g$ is equal to $f'_S$ for a suitable $f'$. Thus $S \models^l \varphi\U\psi$.

\textbf{Empty team property:} Suppose $T = \emptyset$. The claim is clear for atomic formulae; since the team is empty, $p \in t(0)$ and $p \notin t(0)$ holds for all $p \in \mathrm{AP}$ and $t \in T$. The cases for conjunction and disjunction follow immediately from the induction hypothesis. The cases for temporal operators follow immediately from the induction hypotheses as well, since $\emptyset[f,\infty] = \emptyset$ for any $f\colon T\rightarrow \mathcal{P}(\mathbb{N})^+$.

\textbf{Flatness:} We prove by induction on $\varphi$ that $T\models\phi$ if $\{t\} \models\phi$ for all $t\in T$, for every team $T$. The only if direction follows directly from downward closure.

The case for atomic formulae holds by definition, and the cases for conjunction and the next step operator follows directly from the induction hypothesis. Let us consider the case for disjunction. Assume that for all $t \in T$, $\{t\} \models \varphi$ or $\{t\}\models \psi$. Let $T_1$ and $T_2$ be the sets of traces in $T$ that satisfy $\varphi$ and $\psi$, respectively. Clearly $T_1\cup T_2=T$, and by induction hypothesis $T_1 \models^{l} \varphi$ and $T_2 \models^{l} \psi$. Thus $T \models^{l} \varphi \vee \psi$.

For the case of until, suppose $\{t\} \models^l \varphi\U\psi$ for all $t \in T$. Now, for each $t\in T$, there exists a function $f_t\colon \{t\} \to \mathcal{P}(\N)^+$ such that $\{t\}[f_t,\infty] \models \psi$ and for all intermediary $f_t'\colon \{t\} \to \mathcal{P}(\N)^+$, defined for such traces $t$ where $\max(f(t)) \neq 0$, such that $f'_t < f_t$ it holds that $\{t\}[f'_t,\infty] \models \varphi$. We define the functions $f\colon T \to \mathcal{P}(\N)^+$, and $g\colon T' \to \mathcal{P}(\N)^+$ through $f(t) \dfn f_t(t)$ and $g(t) \dfn \{j \in \N \mid j < \sup(f(t)) \}$. By the induction hypothesis $T[f,\infty] \models^{l} \psi$ and $T'[g,\infty] \models^{l} \varphi$. Let $f'\colon T' \to \mathcal{P}(\N)^+$ be any function such that $f' < f$. If we can show that $T'[f',\infty] \models^{l} \psi$, we obtain $T \models \varphi\U\psi$ and are done. To that end, clearly $T'[f',\infty] \subseteq T'[g,\infty]$, and thus we obtain $T'[f',\infty] \models^{l} \psi$ from downward closure.

Finally for the case for $\G$,  suppose $\{t\} \models^l \G\varphi$ for all $t \in T$. Now for each trace $t$ and function $f\colon \{t\}\rightarrow \mathcal{P}(\mathbb{N})^+$ it holds that $\{t\}[f,\infty] \models^l \varphi$. Now by the induction hypothesis, for all functions $F\colon T\rightarrow \mathcal{P}(\mathbb{N})^+$ such that $F(t) \dfn f(t)$ it holds that $T[F,\infty] \models^l \varphi$. Now the functions $F$ are all possible functions. Thus $T \models^l \G\varphi$.

\textbf{Singleton equivalence:} We prove by induction on $\varphi$ that, for every trace $t$, $\{t\} \models^l \varphi$ if and only if $t \models^l \varphi$.

The case for literals is stated in the definition, whereas the case for conjunction follows directly from the induction hypothesis. Hence, consider the next step operator. Suppose $\varphi = \X\psi$. Now by definition $\{t\} \models^l \varphi$ is equivalent with  $\{t\}[1,\infty] \models^l \psi$, which in turn holds if and only if $t[1,\infty] \models \psi$ by induction hypothesis. By the definition of the next step operator the latter is equivalent with $t \models \X\psi$.

Suppose $\varphi = \psi \vee \theta$. By definition $\{t\} \models^l \varphi$ is equivalent with $\{t\} \models^l \psi$ or $\{t\} \models^l \theta$. By the induction hypothesis the latter two are equivalent with $t \models \psi$ or $t \models \theta$, and thus by definition $t \models \psi \vee \theta$ if and only if $\{t\} \models^l \varphi$.

Suppose $\varphi = \psi_1\U\psi_2$. Assume $\{t\} \models^l \varphi$. By definition there is a function $f\colon \{t\} \to \mathcal{P}(\N)^+$ such that $\{t\}[f,\infty]\models^l \psi_2$ and for all intermediary functions $f'\colon T' \to \mathcal{P}(\N)^+$ it holds that $T'[f',\infty] \models^l \psi_1$, where $T' \dfn \{ t \}$, if $f(t) \neq \{ 0 \}$ and otherwise $T' = \emptyset$. We assume $f(t) \neq \{ 0 \}$, as the other case is trivial.
Let $k \dfn \min(f(t))$. By induction hypothesis and downward closure $t[k,\infty] \models \psi_2$. Now for every singleton-valued function $f'\colon T' \to \mathcal{P}(\N)^+$ defined by $f'(t) \dfn \{k'\}$, such that $k' < k$, it holds that $\{t\}[f',\infty] \models^l \psi_1$. Hence by the induction hypothesis, for all $k'  < k$ it holds that $t[k',\infty] \models \psi_1$. Thus $t\models \varphi$.

Now assume $t \models \varphi$. By definition there exists a number $k \geq 0$ such that $t[k,\infty] \models \psi_2$ and for all $0 \leq k' < k$ it holds that $t[k',\infty] \models \psi_1$. Thus we can define a function $f\colon \{t\} \to \mathcal{P}(\N)^+$ such that $f(t) \dfn \{k\}$ and functions $f'\colon \{t\} \to \mathcal{P}(\N)^+$ such that $f'(t) \dfn \{ k' \}$. Now by the induction hypothesis $\{t\}[f,\infty] \models^l \psi_2$ and $\{t\}[f',\infty] \models^l \psi_1$, and furthermore by flatness the latter actually holds for all intermediary functions $g$. Therefore $\{t\} \models^l \varphi$.

Suppose $\varphi = \G \psi$. Assume $\{t\} \models^l \varphi$. By definition for all functions $f\colon \{t\} \to \mathcal{P}(\N)^+$ it holds that $\{t\}[f,\infty] \models^l \psi$. Especially this holds for every function $f_k$ such that $f_k(t) \dfn \{k\}$. By the induction hypothesis then $t[k,\infty] \models \psi$ for all $k$. Thus by definition $t \models \G\psi$.

Now assume $t \models \varphi$. By definition for all $k\geq 0$ it holds that $t[k,\infty] \models \psi$. By the induction hypothesis it follows that $\{t\}[f,\infty] \models^l \psi$ for all $f\colon \{t\} \to \mathcal{P}(\N)^+$ such that $f = \{ k \}$. Thus, by flatness $\{t\}[f',\infty] \models^l \psi$ for all functions $f'\colon \{t\} \to \mathcal{P}(\N)^+$. Thus $\{t\} \models^l \G\psi$.      
 \end{proof}
 
\leftflat*
 \begin{proof}
In order to show \eqref{i} it suffices to extend the proof of Theorem \ref{thmdc} with a  case for $\clor$: Let $T$ be a team of traces and let $S \subseteq T$. Suppose $T \models^l \psi \clor \varphi$. By definition $T \models^l \psi$ or $T \models^l \varphi$. Without loss of generality we may assume $T \models^l \psi$, which entails by the induction hypothesis that $S \models^l \psi$ and  $S \models^l \psi \clor \varphi$.

The proof of  claim \eqref{ii} is a simple induction on the structure of $\phi$.
We show the claim for  $\phi= \alpha\U\psi$, where $\alpha $ is an \LTL-formula. Assume  $T\models \phi$. Then there exists $f\colon T \to \N$ such that   $T[f,\infty] \models \psi$ and, by flatness, for all $(i,t)\in T$ and $k<f((i,t))$ it holds that $\{(i,t[k,\infty])\} \models \alpha$. Define $F\colon T \to \mathcal{P}(\N)^+$ by $F(t) \dfn \{f((i,t)) \mid i\in\N\}$. It is easy to check that $\support(T[f,\infty]) = \support(T)[F,\infty]$. Now by application of the induction hypothesis $\support(T)[F,\infty]\models^l \psi$. Likewise, for all $F'<F$, where $F'\colon T \to \mathcal{P}(\N)^+$ is defined similarly to $F$, $\support(T)[F',\infty] \models^l \alpha$ follows by the flatness of $\alpha$. The proof of the converse implication is similar. Assume  $\support(T)\models^l \phi$ and let $G\colon \support(T) \to \mathcal{P}(\N)^+$ be such that $\support(T)[G,\infty] \models^l \psi$. By downward closure we may assume that $G$ is single valued. Now it is easy to pick $g\colon T \to \N$ such that $\support(T[g,\infty]) = \support(T)[G,\infty]$. From the induction hypothesis it follows that $T[g,\infty] \models \psi$. Just like above, using the fact that $\alpha$ is flat, it follows that  $T\models \phi$.
 \end{proof}
 
\corflatness*
\begin{proof}
Let $\varphi \in \teamltl^l(\clor)$ be flat, and let $\Clor_i \psi_i$ be an equivalent formula given by Theorem \ref{disjnf}, where $\psi_i$ are $\teamltl^l$-formulae. 
The following equivalences hold:
\[
T \lmodels \varphi \quad\Leftrightarrow\quad \forall t\in T : \{t\} \lmodels \Clor_i \psi_i  \quad\Leftrightarrow\quad \forall t\in T : \{t\} \lmodels \bigvee_i \psi_i \quad\Leftrightarrow\quad T \lmodels \bigvee_i \psi_i
\]
The first equivalence follows from the above mentioned Theorem, and flatness of $\varphi$.
The second equivalence follows, for it is easy to check that, for logics that have the empty team property, $\Clor_i$ and $\bigvee_i$ are interchangeable over singleton teams. The last equivalence follows from the flatness of $\teamltl^l$ (Theorem \ref{thmdc}).
\end{proof}

\corlaxstrict*
\begin{proof}
 Let $\varphi$ is a $\teamltl^l(\clor)$-formula. By Theorem \ref{disjnf}, $\varphi$ can be equivalently written as a disjunction $\Clor_i\alpha_i$ of $\teamltl^l$-formulae. Now, for each multiteam $T$,
\[
\support(T)\lmodels\varphi \quad\Leftrightarrow\quad \support(T)\lmodels \Clor_i\alpha_i \quad\Leftrightarrow\quad T\models \Clor_i\alpha_i,
\]
where the last equivalence is due to Theorem \ref{lflat} and the fact that the formulae $\alpha_i$ are flat. Hence, for any given $\teamltl^l(\clor)$-formula, the normal form formula $\Clor_i\alpha_i$ is, in fact, an equivalent $\teamltl(\clor)$ formula, from which $\teamltl^{l}(\clor) \leq \teamltl(\clor)$ follows.

For showing the strict inclusion, we generalize the result from Example \ref{ex1} and show that for $\teamltl(\clor)$ formula $\G(p\ovee q)$ there exists no equivalent $\teamltl^{l}(\clor)$ formula.
For a contradiction, suppose that $\varphi\in \teamltl^l(\clor)$ is equivalent with $\G(p\ovee q)$. By Theorem \ref{disjnf}, we may assume that $\varphi$ is a disjunction $\Clor_i\alpha_i$ of $n$ $\teamltl^l$-formulae. Define $t_i \dfn \{p\}^i \{q\}^{\omega}$ for $i\leq n+1$.
Now clearly $\{(1,t_i)\} \models \G(p\ovee q)$ and thus $\{t_i\} \lmodels \varphi$, for each $i$. By the semantics of $\clor$ this implies that for each $i$ there exists $j_i\leq n$ such that $\{t_i\} \lmodels \alpha_{j_i}$. Now from the pigeonhole principle, there exists $1\leq k < l \leq n+1$ such that $j_k=j_l$. Thus $\{t_k\} \lmodels \alpha_{j_k}$ and $\{t_l\} \lmodels \alpha_{j_k}$, from which $\{t_k,t_l\} \lmodels \alpha_{j_k}$ follows, by flatness of $\alpha_{j_k}$. Thus $\{t_k,t_l\} \lmodels \varphi$ and $\{(1,t_k),(1,t_l)\} \models \G(p\ovee q)$, which is clearly false.
\end{proof}

\end{document}